\documentclass[11pt]{article}
\usepackage[letterpaper, margin=1in]{geometry}
\usepackage[utf8]{inputenc}
\usepackage{authblk}

\usepackage{amsmath,amssymb,amsthm,mathtools}
\usepackage{microtype}
\usepackage{cite}
\usepackage[colorlinks=true,allcolors=blue]{hyperref}

\newtheorem{theorem}{Theorem}

\newtheorem{lemma}{Lemma}

\theoremstyle{definition}
\newtheorem{problem}[lemma]{Problem}

\theoremstyle{remark}

\newcommand{\norm}[1]{\left\lVert #1\right\rVert}
\newcommand{\ket}[1]{|#1\rangle}
\newcommand{\bra}[1]{\langle #1|}
\newcommand{\braket}[2]{\langle #1|#2\rangle}
\newcommand{\Adv}{\operatorname{Adv}}
\newcommand{\e}{\mathrm e}
\newcommand{\iu}{\mathrm i}

\title{A Near-Optimal Joint Lower Bound for Sparse\\ Quantum Linear System Solvers}
\author{Dhrumil Patel\\ Department of Computer Science, Virginia Tech, Alexandria, VA 22305, USA}

\begin{document}
\maketitle

\begin{abstract}
Quantum linear system solvers form one of the central algorithmic primitives in quantum computing, with applications ranging from differential equations and optimization to machine learning. 
Their cost is commonly measured through query complexity, which counts the number of oracle calls needed to access the input matrix. 
In the sparse-access model, this complexity is governed by three parameters: the condition number $\kappa$, the sparsity $s$ of the input matrix, and the target precision $\varepsilon$. 
The dependence on $\kappa$ and $\varepsilon$ is already well understood through the lower bound $\Omega(\kappa\log(1/\varepsilon))$, which matches the best known scaling in these parameters. 
Once the sparsity $s$ is included, however, the expected lower bound has long been conjectured to be $\Omega(\kappa\sqrt{s}\log(1/\varepsilon))$.
Recent work by Mori et al. [Quantum Sci. Tech. 11 035063 (2026)] made an important step towards this goal by establishing the lower bound $\Omega(\kappa\sqrt{s})$ for constant error $\varepsilon$. 
In this work, we complete the picture and prove the full joint lower bound $\Omega(\kappa\sqrt{s}\log(1/\varepsilon))$ in the sparse-access model, thereby establishing the anticipated dependence on all three parameters simultaneously.
\end{abstract}

\section{Introduction}

The quantum linear system (QLS) problem is to prepare a quantum state that represents the solution of a linear system $Ax=b$. 
Given an invertible matrix $A\in\mathbb{C}^{N\times N}$ together with a normalized state $\ket{b}\in\mathbb{C}^N$, the goal is to output a normalized state $\ket{\widetilde{x}}$ that is close to the exact solution state $\ket{x}\coloneqq A^{-1}\ket{b}/\lVert A^{-1}\ket{b}\rVert$. 
More precisely, we require $\lVert\ket{\widetilde{x}}-\ket{x}\rVert\le\varepsilon$, where $\lVert\cdot\rVert$ denotes the standard Euclidean norm. 
In the sparse-access model considered here, the complexity of this task is governed by three parameters: the condition number $\kappa$ of $A$, the sparsity $s$ of $A$, and the target error $\varepsilon$.

The QLS problem was originally introduced by Harrow, Hassidim, and Lloyd~\cite{Harrow2009}, and subsequent work has progressively improved the dependence on both the condition number $\kappa$ and the target error $\varepsilon$. 
In the block-encoding model, the best known algorithms achieve query complexity $O(\kappa\log(1/\varepsilon))$~\cite{Costa2022,Dalzell2024, Cunningham2024}.  
In the sparse-access model, however, the matrix $A$ is not supplied through a block encoding.  
Instead, the algorithm accesses the locations and values of the nonzero entries of $A$ through specific sparse-matrix oracles, which we define formally in~\eqref{eq:position-oracle} and~\eqref{eq:value-oracle}.
This introduces the sparsity $s$ as an additional complexity parameter.  
In this model, Low gave a QLS algorithm using
$O((\kappa\sqrt{s})^{1+o(1)}/\varepsilon^{o(1)})$
queries to the sparse-matrix oracles~\cite{Low2019}.

The corresponding lower bounds have, until now, captured different subsets of these parameters.  
The dependence on $\kappa$ and $\varepsilon$ is captured by the lower bound $\Omega(\kappa\log(1/\varepsilon))$.  
The original proof of this lower bound is attributed to Harrow and Kothari, although their result remains unpublished, and detailed derivations have appeared in later work~\cite{Costa2025,Mori2026}.  
Once the sparsity $s$ is included, the joint lower bound has long been expected to take the form
\begin{equation}\label{eq:expected-lower-bound}
    \Omega(\kappa\sqrt{s}\log(1/\varepsilon)).
\end{equation}
While this specific scaling is often treated as folklore in the QLS literature, a rigorous proof for the joint dependence on all three parameters has been missing.

More recently, the authors of Ref.~\cite{Mori2026} made an important step toward this result by proving a lower bound of $\Omega(\kappa\sqrt{s})$ for instances where the target error $\varepsilon$ is fixed to a constant.
Their result successfully establishes the joint dependence on $\kappa$ and $s$, but it does not capture the logarithmic dependence on the error, $\log(1/\varepsilon)$.  
Therefore, establishing a single lower bound that simultaneously captures the dependence on all three parameters remained an open problem.

Our main result, Theorem~\ref{thm:main}, resolves this open problem and proves~\eqref{eq:expected-lower-bound}.  
It is worth emphasizing in what sense this scaling is already the strongest one suggested by known QLS algorithms.  
If we fix the error $\varepsilon$ to a constant, Low's algorithm achieves a query complexity of $O((\kappa\sqrt{s})^{1+o(1)})$~\cite{Low2019}, which implies that we cannot increase the
polynomial dependence on the sparsity beyond $\sqrt{s}$. 
On the other hand, if we fix the sparsity $s$ to a constant, we can convert sparse access into block-encoding access with only a constant overhead. Once we have a block encoding, we can solve QLS using $O(\kappa\log(1/\varepsilon))$ queries~\cite{Costa2022,Dalzell2024, Cunningham2024}. 
This tells us that neither the polynomial power of $\kappa$ nor the logarithmic dependence on $1/\varepsilon$ can be strictly increased.  
Thus, the lower bound in~\eqref{eq:expected-lower-bound} tightly brings together the strongest parameter-wise dependencies compatible with the best known upper bounds.  
This observation does not rule out a more complicated tradeoff among the parameters, but no such tradeoff is presently known.

Our proof begins with a Boolean query problem that we call \textsc{Position-Parity}.  
The input to this problem is a length-$m$ bit string containing exactly one $1$, and the task is to determine whether the position of this unique $1$ is an even integer or an odd integer.  
We show that solving one instance of the \textsc{Position-Parity} problem requires $\Omega(\sqrt{m})$ quantum queries.  
Later in the proof, we choose $m=\Theta(s)$, which produces the factor $\sqrt{s}$ in our final QLS lower bound.

To obtain the remaining scaling factors $\kappa$ and $\log(1/\varepsilon)$, we combine $n$ independent instances of the \textsc{Position-Parity} problem and reduce the computation of their collective parity (the XOR of the $n$ answers) to the solution of a single sparse linear system. 
We choose the number of instances to be $n=\Theta(\kappa\log(1/\varepsilon))$.
We then apply the XOR lemma (Lemma~\ref{lem:xor}) of Lee and Roland~\cite{Lee2012}, which allows us to preserve the $\Omega(\sqrt{m})$ query hardness of the individual \textsc{Position-Parity} instances under this $n$-fold composition.
Applying the XOR lemma yields a Boolean query lower bound of $\Omega(n\sqrt{m})$, which becomes the desired QLS lower bound $\Omega(\kappa\sqrt{s}\log(1/\varepsilon))$ after substituting $m=\Theta(s)$ and $n=\Theta(\kappa\log(1/\varepsilon))$.

The rest of the paper develops this reduction in order.
In Section~\ref{sec:prelim}, we formally introduce the QLS problem, the access models, and the query-complexity tools that we use in the proof, including the XOR lemma of Ref.~\cite{Lee2012}.
In Section~\ref{sec:strategy}, we state our main result and give a proof sketch of the reduction.
We then develop each step of the reduction in detail in Sections~\ref{sec:selector}--\ref{sec:lower}.

\section{Preliminaries and Access Model}\label{sec:prelim}

We first fix the notation and query-complexity tools used throughout the paper.  
For a vector $v$, we write $\norm{v}$ for its Euclidean norm, and for a matrix $A$, we write $\norm{A}$ for its spectral norm and $\norm A_1$ for its trace norm.
We denote the largest and smallest singular values of $A$ by $\sigma_{\max}(A)$ and $\sigma_{\min}(A)$, respectively.
If a matrix $A$ is invertible, we denote its condition number as $\kappa(A) = \sigma_{\max}(A)/\sigma_{\min}(A)$.  
We call a matrix $A$ $s$-sparse if every row and every column of $A$ contains at most $s$ nonzero entries. 
We use the symbol $\circ$ to denote the Hadamard (entrywise) matrix product.  
For a positive integer $n$, we use the standard set shorthand $[n]\coloneqq\{1,\ldots,n\}$, and let $\mathbb Z_m=\{0,1,\ldots,m-1\}$.  
We index bit positions from zero, so an input $x\in\{0,1\}^m$ has bits $x_q$ with $q\in\mathbb Z_m$.
We use~$\oplus$ to denote the bitwise XOR operation (addition modulo 2).
For a Boolean function $f$, we write $Q_\eta(f)$ for the minimum number of quantum queries required to compute $f$ with worst-case error probability at most $\eta$  on every promised input.
All logarithms in this paper are the natural logarithms unless otherwise specified.

We now state the standard formulation of the quantum linear system problem.

\begin{problem}[Quantum linear system (QLS)]\label{prob:qls}
Let $A\in\mathbb{C}^{N\times N}$ be an invertible matrix and let $\ket{b}\in\mathbb{C}^{N}$ be a normalized quantum state.  
We define the exact normalized solution state as
\begin{equation}\label{eq:solution-state}
    \ket{x}
    \coloneqq
    \frac{A^{-1}\ket{b}}
         {\norm{A^{-1}\ket{b}}}.
\end{equation}
Given access to $A$, a procedure for preparing $\ket{b}$, and a target error $\varepsilon>0$, the QLS problem is the task of preparing a quantum state $\ket{\widetilde{x}}$ such that $\norm{\ket{\widetilde{x}}-\ket{x}} \le \varepsilon$.
\end{problem}

To discuss the query complexity of Problem~\ref{prob:qls}, we need to specify how a QLS algorithm accesses the input matrix $A$.
Throughout this paper, we consider the standard sparse-matrix access model.  
Suppose that the matrix $A$ is $s$-sparse.  
For each row $i\in[N]$ and each sparse index $\ell\in[s]$ with $1\le s\le N$, let $\pi_A(i,\ell)\in[N]$
denote the column index stored in the $\ell$-th position of the sparse representation of row $i$.
The set of column indices $\{\pi_A(i,\ell):\ell\in[s]\}$ contains the locations of all nonzero entries located in row $i$ of the matrix $A$.  
If row $i$ contains fewer than $s$ nonzero entries, the remaining sparse indices $\ell$ may point to entries $A_{ij}=0$.

A quantum algorithm queries the sparse representation of $A$ via two quantum oracles.  
The location of the nonzero entries is accessed through the sparse-position oracle, defined as
\begin{equation}\label{eq:position-oracle}
    O_A^{\mathrm{pos}}
    \ket{i,\ell,j}
    =
    \ket{i,\ell,j\oplus \pi_A(i,\ell)},
\end{equation}
where the final register stores a column index in binary.
Thus, given a row $i$ and sparse index $\ell$, the oracle returns the corresponding column index $\pi_A(i,\ell)$.
The numerical value of the corresponding matrix entry is accessed through the sparse-value oracle, defined as
\begin{equation}\label{eq:value-oracle}
    O_A^{\mathrm{val}}\ket{i,j,z}
=
\ket{i,j,z\oplus [A_{ij}]},
\end{equation}
where $[A_{ij}]$ denotes a fixed binary representation of $A_{ij}$.
For the hard instances we construct in this paper, every matrix entry is an exact dyadic rational (a number of the form $a/2^r$ for integers $a$ and $r\ge 0$). 
Thus, this binary representation is exact and introduces no rounding error. 
To this end, the query complexity of a QLS algorithm is defined as the total number of calls made to the oracles $O_A^{\mathrm{pos}}$, $O_A^{\mathrm{val}}$, and their inverses.
Note that these oracles are self-inverse.
The preparation of $\ket{b}$ is available independently and is not counted as a matrix query.

Our proof relates these sparse-matrix queries to queries made in the standard Boolean query model. 
Consider a promise set $D\subseteq\{0,1\}^M$ and a hidden input string $x\in D$. 
Access to the individual bits of $x$ is provided by a standard Boolean oracle:
\begin{equation}\label{eq:boolean-oracle}
    O_x\ket{q,z}
    =
    \ket{q,z\oplus x_q},
    \qquad
    q\in\mathbb Z_M,
\end{equation}
where $z \in \{0, 1\}$.
When the overall Boolean input consists of $k$ separate bit strings, denoted as $X=(x^{(1)},\ldots,x^{(k)})\in D^k$, we use a combined Boolean oracle defined as:
\begin{equation}\label{eq:combined-boolean-oracle}
    O_X\ket{t,q,z}
    =
    \ket{t,q,z\oplus x^{(t)}_q},
    \qquad
    t\in[k],\quad q\in\mathbb Z_M.
\end{equation}
The register $t$ specifies which of the $k$ strings is being queried, and the register $q$ specifies the index within that string.  
Each use of either Boolean oracle or its inverse counts as one Boolean query.
Again, these oracles are self-inverse.

We will lower bound Boolean query complexity with the negative-weight adversary method~\cite{Hoyer2007}. 
Let $M\ge1$ be the number of input bits, let $D\subseteq\{0,1\}^{M}$ be a promise set, and let $f:D\to\{0,1\}$ be a partial Boolean function, meaning that $f$ is defined only on the promise set $D$, rather than on all $M$-bit strings.  
For every query position $q\in\mathbb Z_M$, we define a matrix $\Delta_q$ whose rows and columns are indexed by the inputs $x,y\in D$:
\begin{equation}\label{eq:delta-q}
    \Delta_q(x,y)
    =
    \begin{cases}
        1, & x_q\neq y_q,\\
        0, & x_q=y_q.
    \end{cases}
    \qquad x,y\in D.
\end{equation}
The matrix $\Delta_q$ simply marks with a $1$ which pairs of inputs disagree at position $q$, with no restriction on whether they also disagree elsewhere.
For a nonconstant function $f$, its negative-weight adversary bound, denoted by $\Adv^{\pm}(f)$, is defined formally as
\begin{equation}\label{eq:adv-def}
    \Adv^{\pm}(f)
    \coloneqq
    \max_{\Gamma}
    \frac{\norm{\Gamma}}
         {\displaystyle\max_{q\in\mathbb Z_M}
          \norm{\Gamma\circ\Delta_q}},
\end{equation}
where $\Gamma$ ranges over nonzero Hermitian matrices whose rows and columns are indexed by inputs $x,y\in D$ subject to the condition $\Gamma_{x,y}=0$ whenever $f(x)=f(y)$.
Intuitively, the numerator measures the size of the adversary matrix $\Gamma$, while the denominator measures the largest part that any single query can distinguish.
For our Boolean problem, we will calculate both norms explicitly.

The second query-complexity ingredient is the XOR lemma of Lee and Roland~\cite{Lee2012}.  
For $k$ inputs $x^{(1)},\ldots,x^{(k)}\in D$, we define the XOR of the function outputs as
\begin{equation}\label{eq:xor-function}
 \left(\bigoplus\circ f^{(k)}\right)(x^{(1)},\ldots,x^{(k)}) = f(x^{(1)})\oplus\cdots\oplus f(x^{(k)})
\end{equation}
Lee and Roland originally stated their XOR lemma using an additive adversary quantity $\Adv^\star(F_f)$ associated with the matrix $F_f(x,y)=1$ if $f(x)=f(y)$ and $F_f(x,y)=0$ otherwise. 
Their Remark~3.4 shows that the usual negative-weight adversary bound in~\eqref{eq:adv-def} satisfies the inequality $\Adv^\pm(f)\le \Adv^\star(F_f)$. 
Therefore, their Corollary~5.3 in Ref.~\cite{Lee2012} immediately gives the following form, which is the only version we use.

\begin{lemma}[XOR lemma~\cite{Lee2012}]\label{lem:xor}
Let $f:D\rightarrow\{0,1\}$ be a partial Boolean function, let $k\ge1$, and let $0\le\delta\le1$.  
Then
\begin{equation}\label{eq:xor}
    Q_{(1-\delta^{k/2})/2}
    \!\left(
        \bigoplus\circ f^{(k)}
    \right)
    \ge
    \frac{k\delta}{8}\,
    \Adv^{\pm}(f).
\end{equation}
\end{lemma}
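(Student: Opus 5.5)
This lemma is imported from Lee and Roland, so the plan is to derive it from their Corollary~5.3 and Remark~3.4 rather than reprove the XOR lemma from scratch.

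\textbf{Step 1: their conclusion.} I would first recall what Corollary~5.3 says in their notation. For the filtered $\gamma_2$ / additive adversary quantity $\Adv^\star(F_f)$, it states that computing the $k$-fold XOR with success bias at least $\delta^{k/2}$ requires at least $\frac{k\delta}{8}\Adv^\star(F_f)$ queries. Success bias $\delta^{k/2}$ means error at most $(1-\delta^{k/2})/2$.

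Obtaining this corollary inside their framework uses three ingredients:
\begin{itemize}
\item the characterization of bounded-error query complexity by the filtered $\gamma_2$ norm, with the error-dependent form of the adversary lower bound;
\item the multiplicativity (direct-product property) of the additive adversary $\Adv^\star$ under tensor products, since $F_{\oplus\circ f^{(k)}}$ is built from $F_f^{\otimes k}$-type structure;
\item a bias-based strong direct product statement, which translates weak success probability $1/2+\delta^{k/2}/2$ into a linear-in-$k$ lower bound.
\end{itemize}

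\textbf{Step 2: switch to $\Adv^\pm$.} I would check that $F_f$ here is exactly their ``same-output'' matrix. Their Remark~3.4 then gives $\Adv^\pm(f)\le \Adv^\star(F_f)$. The argument is that any feasible $\Gamma$ for~\eqref{eq:adv-def}, which vanishes on pairs with $f(x)=f(y)$, is feasible for the dual formulation of $\Adv^\star(F_f)$ with the same objective value.

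\textbf{Step 3: combine.} Chaining the two inequalities gives~\eqref{eq:xor}. I would also record two sanity checks:
\begin{itemize}
\item the case $\delta=0$ is trivial, since the error bound $1/2$ allows zero queries and the right-hand side is $0$;
\item a constant $f$ gives $\Adv^\pm$ trivially $0$ under the usual convention.
\end{itemize}

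\textbf{Main obstacle.} The hard part is the faithful translation of conventions between the two papers:
\begin{itemize}
\item \emph{error versus bias}, i.e.\ whether $\delta^{k/2}$ or $\delta^{k}$ appears and whether $\delta$ is already a bias;
\item \emph{constant factors}, i.e.\ the factor $1/8$ coming from their relation between $\gamma_2$ with error and query count;
\item \emph{partial functions}, making sure their results cover promise sets $D$.
\end{itemize}
I would handle these by rereading the statement of Corollary~5.3 with the substitution $\text{error}=(1-\text{bias})/2$ and confirming that their adversary is defined over arbitrary promise sets.
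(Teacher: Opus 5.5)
Your proposal matches the paper's treatment: the paper also does not reprove the XOR lemma, but takes Lee and Roland's Corollary~5.3 (stated for $\Adv^\star(F_f)$ with error threshold $(1-\delta^{k/2})/2$) and weakens its right-hand side with Remark~3.4, $\Adv^\pm(f)\le\Adv^\star(F_f)$, exactly as in your Steps~1--3. You should drop your speculative list of ingredients behind Corollary~5.3, since the citation does not need it and it misdescribes their route: Lee and Roland get the direct-product behaviour from the multiplicative adversary method, by showing it is at least as large as the additive one, not from a tensor-product property of the additive adversary.
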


\section{Main Result and Proof Sketch}\label{sec:strategy}

We are now ready to state the main result of this paper.

\begin{theorem}[Main Result]\label{thm:main} For $\kappa\ge31$, $s\ge5$, and $0<\varepsilon\le1/64$, the quantum query complexity of QLS (Problem~\ref{prob:qls}) in the standard sparse-access model is
$\Omega\!\left(\kappa\sqrt{s}\log(1/\varepsilon)\right)$. 
\end{theorem}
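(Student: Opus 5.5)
The plan follows the route laid out in the introduction: a Boolean hardness result for \textsc{Position-Parity}, amplified by Lemma~\ref{lem:xor}, then embedded into a sparse linear system so that any QLS solver computes the XOR of many instances.

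\textbf{Step 1: hardness of one instance.} Let $D\subseteq\{0,1\}^m$ be the strings of Hamming weight one, and let $f(x)$ be the parity of the position of the unique $1$. I will show $\Adv^{\pm}(f)=\Omega(\sqrt m)$. The adversary matrix will be $\Gamma(e_p,e_{p'})=1$ when $p\not\equiv p' \pmod 2$, and $0$ otherwise. This is a bipartite all-ones block between the even and odd positions, so $\norm{\Gamma}=\sqrt{\lceil m/2\rceil\lfloor m/2\rfloor}=\Theta(m)$. For a query position $q$, the matrix $\Gamma\circ\Delta_q$ is supported on pairs in which exactly one of the two inputs equals $e_q$. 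It is therefore a star with at most $m/2$ edges, and its norm is at most $\sqrt{m/2}$. The ratio gives $\Omega(\sqrt m)$.

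\textbf{Step 2: XOR amplification.} Apply Lemma~\ref{lem:xor} with $k=n$ and a constant $\delta$, say $\delta=1/2$. This shows that computing $\bigoplus\circ f^{(n)}$ with error at most $(1-\delta^{n/2})/2$ requires $\Omega(n\sqrt m)$ queries. The key point is that this holds even though the success bias is only $\delta^{n/2}=2^{-\Theta(n)}$. So it suffices to give a reduction in which a QLS solution with error $\varepsilon$ yields a bias of at least $2^{-cn}$ for a suitable constant $c$.

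\textbf{Step 3: the reduction.} First I build a sparse selector gadget. Given $x^{(t)}$, form a block in which the row for stage $t$ has $\Theta(m)$ potential nonzero positions, one for each $q$, with value $x^{(t)}_q$ and a sign or routing determined by $q\bmod 2$. A single sparse-position or value query to $A$ is then simulated by $O(1)$ Boolean queries to $O_X$, which gives $s=\Theta(m)$. Next I chain $n$ stages in the style of the Harrow--Kothari construction. The matrix is $A=I-U/\!\!\;(1+1/\kappa')$-like, or a path graph with two lanes, where stage $t$ either swaps or keeps the lane according to $f(x^{(t)})$. After $n$ stages the lane index encodes $\bigoplus_t f(x^{(t)})$. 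The amplitude decays geometrically by a factor $(1-\Theta(1/\kappa))$ per stage, so the condition number is $O(\kappa)$ and the final-lane amplitude in $\ket{x}$ is about $e^{-\Theta(n/\kappa)}$. Choosing $n=\Theta(\kappa\log(1/\varepsilon))$ makes this weight comparable to $\varepsilon$, say $\varepsilon^{1/2}$, while still exceeding $\varepsilon$ by a constant factor. Measuring $\ket{\widetilde x}$ and reading the lane at the final stage, with a fallback of outputting a random bit otherwise, then gives bias $\Omega(\varepsilon^{c'})=2^{-O(n/\kappa)}$. To match this with $\delta^{n/2}$, I will instead take $k=n$ and $\delta=\varepsilon^{\Theta(1/n)}=e^{-\Theta(1/\kappa)}$. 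Then $\delta\ge 1/2$, and Lemma~\ref{lem:xor} gives $Q\ge n\delta\sqrt m/8=\Omega(\kappa\log(1/\varepsilon)\sqrt s)$. The numerical hypotheses $\kappa\ge31$, $s\ge5$, $\varepsilon\le1/64$ serve to make these constants consistent.

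\textbf{Main obstacle.} The hardest part is Step 3. Three things must hold simultaneously: the matrix must be exactly $s$-sparse with dyadic entries; each oracle call must be simulable with $O(1)$ Boolean queries, which is delicate because the position oracle must reveal where the $1$ sits without leaking more than $O(1)$ queries' worth of information; and the condition number must be bounded uniformly over all inputs in $D^n$. For the position oracle, my first attempt is to make the column pattern input-independent: each selector row lists all $m$ candidate columns, and only the values depend on $x^{(t)}_q$. The position oracle is then free, and the value oracle costs one Boolean query. Beyond that, I expect the main work to be a careful singular-value bound showing that the chained gadget has $\sigma_{\min}=\Omega(1/\kappa)$ for every input.
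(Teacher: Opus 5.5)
Your skeleton matches the paper: the bipartite $\Gamma$ for \textsc{Position-Parity}, Lemma~\ref{lem:xor} with $\delta=\varepsilon^{\Theta(1/n)}$ bounded below by a constant, an $x$-independent listed support so the position oracle is free, decay $r=1-\Theta(1/\kappa)$, and $n=\Theta(\kappa\log(1/\varepsilon))$. However, the readout in Step~3 fails as stated.

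If only the final stage carries the answer, its weight in the \emph{normalized} solution is not $e^{-\Theta(n/\kappa)}$. Take $A^{-1}\ket{b}=\sum_{\ell=0}^{n}r^\ell\ket{\beta_\ell}$ with orthonormal $\ket{\beta_\ell}$. The last stage then has probability $r^{2n}(1-r^2)/(1-r^{2n+2})$. For every $n\ge R$, which you already need to get the factor $\kappa$, this is $O(r^{2n}/\kappa)\le O(1/\kappa)$. The reason is that the normalization $\sum_\ell r^{2\ell}=\Theta(\kappa)$ dilutes any single clock value. For the decoder's bias to survive an $\varepsilon$-perturbation, this weight must exceed $2\varepsilon$. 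So for $\kappa\varepsilon\gtrsim1$, a regime the theorem covers, no $n\ge R$ works at all. For smaller $\kappa$ you only get $n=O(\kappa\log(1/(\kappa\varepsilon)))$.

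The missing ingredient is to hold the answer for $\Theta(n)$ idle clock steps, as in the paper's middle interval $n\le t<2n$ where $L_t=I$. The answer-carrying region then has weight $p=r^{2n}/(1+r^{2n}+r^{4n})=\Theta(r^{2n})$, with no $1/\kappa$ loss. Choosing $r^n\approx 3\sqrt{\varepsilon}$ gives $p>6\varepsilon$, hence success probability $>1/2+2\varepsilon$ after the trace-distance loss. That is exactly the bias that $\delta=(4\varepsilon)^{2/n}$ in the XOR lemma requires.

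The second gap is the one you flag yourself: the stage gadget is never pinned down, and conditioning is deferred to an unproven input-dependent singular-value bound. The paper's key construction removes that work. It sets $K_x=\sum_j(x_j-2/m)S^{j+1}=S^{j_*+1}(I-2\ket{u}\!\bra{u})$, which has three properties:
\begin{itemize}
\item it is exactly unitary;
\item every entry $x_{a-b-1\bmod m}-2/m$ is nonzero, so the true support is all of $\mathbb Z_m\times\mathbb Z_m$ for every $x$, and each entry depends on a single bit;
\item on the span of the even/odd uniform states it acts as $\ket{z_L}\mapsto-\ket{(z\oplus g(x))_L}$.
\end{itemize}
Since $B_X$ is then unitary, $\kappa(I-rB_X)\le(1+r)/(1-r)$ is immediate from the Neumann series, with no case analysis over inputs. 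The uncompute stage additionally gives $B_X^{3n}=I$, which makes the history state and its weights explicit.

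Your $0/1$-valued routing can be rescued, but only if you design it so that $B$ is a contraction. For instance, send lane $z$ to an intermediate node $(z,q)$ with weight $x_q$, and from there to lane $z\oplus(q\bmod 2)$ with weight $x_q$. Then $B$ is a partial permutation matrix and $\norm{(I-rB)^{-1}}\le 1/(1-r)$. Note that this variant relies on the position oracle being allowed to list $x$-dependent zero entries. The paper's $-2/m$ offset makes the genuine nonzero pattern input-independent, so it does not depend on that convention.
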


Before giving the full proof, we explain the construction and indicate exactly where the three factors $\sqrt{s}$, $\kappa$, and $\log(1/\varepsilon)$ arise.
The detailed argument is then developed in Sections~\ref{sec:position-parity}--\ref{sec:lower}.

The proof begins with a Boolean problem that we refer to as the \textsc{Position-Parity} problem.
An input to this problem is a bit string $x\in\{0,1\}^{m}$ with the promise that it contains exactly one $1$.
We denote the set of all such $m$-bit strings by $D_m$, which is the promise set for the problem.
The goal is to determine whether the position of this $1$ is an even or odd number.
Equivalently, if $j\in\mathbb{Z}_m$ is the unique index such that $x_j=1$, we define $g(x) \coloneqq j\bmod 2.$
Thus, given quantum query access to $x\in D_m$, the \textsc{Position-Parity} problem is to compute $g(x)$.
Using the negative-weight adversary method, we prove that this problem requires $\Omega(\sqrt{m})$ queries to $O_x$.

We next associate a unitary matrix with a single instance of the \textsc{Position-Parity} problem.
For an input string $x\in D_m$, we construct an $m$-dimensional unitary matrix $K_x$. 
We define two orthogonal states $\ket{0_L}$ and $\ket{1_L}$ that span a logical two-dimensional subspace, and construct the unitary $K_x$ such that its action on this subspace directly encodes the value $g(x)$:
\begin{equation}\label{eq:sketch-logical}
K_x\ket{z_L} = -\ket{(z\oplus g(x))_L}, \qquad z\in\{0,1\}.
\end{equation}
Thus, apart from an overall minus sign, one application of $K_x$ XORs the value $g(x)$ into the logical state.
Crucially, we construct the matrix $K_x$ such that the locations of its nonzero entries are entirely independent of the hidden bit string $x$. 
Only the numerical values of the entries depend on $x$, and each value depends on exactly one bit of $x$. 
This specific property allows us to simulate sparse-matrix queries to the final QLS instance using only a constant number of Boolean queries.

We next combine $n$ independent instances of the \textsc{Position-Parity} problem, denoted by $X=(x^{(1)},\ldots,x^{(n)})$, and define
\begin{equation}\label{eq:F-sketch}
    F(X)
    \coloneqq
    \bigoplus_{t=1}^{n} g\!\left(x^{(t)}\right).
\end{equation}
One can compute $F(X)$ by applying the unitaries $K_{x^{(1)}},\ldots,K_{x^{(n)}}$ in sequence. 
By~\eqref{eq:sketch-logical}, it is clear that each unitary $K_{x^{(t)}}$ XORs the corresponding value $g(x^{(t)})$ into the logical state, so starting from the state $\ket{0_L}$ the sequence produces the state $\ket{F(X)_L}$, up to the accumulated phase $(-1)^n$.
We choose $n$ to be an even number, so that $(-1)^n=1$ and the resulting logical state is exactly $\ket{F(X)_L}$.

To record the intermediate states of this computation, we construct a clock unitary $B_X$ that advances the computation through $3n$ steps. 
During the first $n$ steps, it applies $K_{x^{(1)}},\ldots,K_{x^{(n)}}$ sequentially as mentioned above, producing the state $\ket{F(X)_L}$. 
During the next $n$ steps, it applies the identity, so the logical register remains in the state $\ket{F(X)_L}$. 
During the final $n$ steps, it applies the inverse unitaries $K_{x^{(n)}}^\dagger,\ldots,K_{x^{(1)}}^\dagger$ in reverse order, thereby uncomputing the first stage and returning the logical register to $\ket{0_L}$.
Thus, repeated applications of $B_X$ generate the successive states visited during this $3n$-step computation, which we refer to as its computation history. After all $3n$ steps, the computation returns to its starting point, and we show that $B_X^{3n}=I$.

We next encode this computation history into a linear system.
For a decay parameter $0<r<1$, we define $A_X
\coloneqq
I-rB_X.$
Since $B_X$ is unitary, we have $\norm{rB_X}=r<1$, so $A_X$ is invertible.
Starting from the state $\ket{b_0}=\ket{0_L}\ket{0}_c$, where $\ket{0}_c$ denotes the initial clock state, each power $B_X^\ell\ket{b_0}$ gives the state reached after $\ell$ steps of the computation. 
Expanding $A_X^{-1}$ as a Neumann series and using the identity $B_X^{3n}=I$, we show that $A_X^{-1}\ket{b_0}$ is a weighted superposition of these $3n$ successive states, with the amplitude of the state at step $\ell$ proportional to $r^\ell$.
In this way, the solution of the linear system contains the entire computation history, including the middle $n$ steps during which the logical state is $\ket{F(X)_L}$. 

The same parameter $r$ also controls the condition number of $A_X$.
Choosing $r=1-1/R$ gives $\kappa(A_X)=\Theta(R)$.
Since $A_X$ need not be symmetric, we embed it into a real symmetric matrix $H_X$ while preserving the condition number.
We also show that the normalized QLS solution for $H_X$ contains the same weighted computation history as $A_X^{-1}\ket{b_0}$.
We later choose $R=\Theta(\kappa)$.

We now use the computation history contained in the QLS solution to recover $F(X)$.
By choosing $n$ appropriately, we ensure that the total probability of observing one of the middle clock values $\{n,\ldots,2n-1\}$ is greater than $6\varepsilon$. 
Throughout this entire interval, the logical register is exactly in the state $\ket{F(X)_L}$. 
Therefore, if measuring the clock gives a value in the middle interval, we can determine $F(X)$ with certainty by measuring the logical register; otherwise, we simply output a uniformly random bit.
For the exact QLS solution, this decoder succeeds with probability greater than $1/2+3\varepsilon$. 

If the QLS algorithm instead prepares an approximate state $\ket{\widetilde{x}}$ satisfying $\norm{\ket{\widetilde{x}}-\ket{x}}\le\varepsilon$, then the trace distance from the exact solution is at most $\varepsilon$, so the success probability of the decoder decreases by at most $\varepsilon$. Hence an $\varepsilon$-accurate QLS solution still allows us to compute $F(X)$ with success probability greater than $1/2+2\varepsilon$.

It remains to translate this decoding procedure into a query lower bound.
Applying the XOR lemma (Lemma~\ref{lem:xor}) to the $n$ independent \textsc{Position-Parity} instances shows that computing $F(X)$ with success probability greater than $1/2+2\varepsilon$ requires $\Omega(n\sqrt{m})$ Boolean queries.
On the other hand, by construction, sparse-matrix queries to $H_X$ can be simulated using only a constant number of Boolean queries to $X$. 
Therefore, if a QLS algorithm uses $T$ sparse-matrix queries, we must have $T=\Omega(n\sqrt{m})$.
To ensure that the probability of the middle clock interval remains larger than $6\varepsilon$, we choose the number of instances so that the decay $r^n$ is of order $\sqrt{\varepsilon}$. Since $r=1-1/\Theta(\kappa)$, this gives $n=\Theta(\kappa\log(1/\varepsilon))$.
Finally, choosing $m=\Theta(s)$ and substituting these relations into $T=\Omega(n\sqrt{m})$ gives $T = \Omega(\kappa\sqrt{s}\log(1/\varepsilon))$, which proves the claimed lower bound in~\eqref{eq:expected-lower-bound}.

\section{Query Complexity of \texorpdfstring{\textnormal{\textsc{Position-Parity}}}{Position-Parity}}
\label{sec:position-parity}

We begin the detailed proof by formally defining the \textsc{Position-Parity} problem and establishing its quantum query complexity.
Let $m\ge4$ be an even integer, and for each $j\in\mathbb{Z}_m$, let $e_j\in\{0,1\}^m$ denote the bit string whose unique $1$ occurs at position $j$.
We define the promise set
\begin{equation}
    D_m \coloneqq \{e_j:j\in\mathbb{Z}_m\}
\end{equation}
and the partial Boolean function
\begin{equation}\label{eq:g-def}
    g(e_j) \coloneqq j\bmod2.
\end{equation}
Thus, given quantum query access to an input $x\in D_m$ through the oracle $O_x$ defined in~\eqref{eq:boolean-oracle}, the \textsc{Position-Parity} problem is to determine $g(x)$, that is, to determine whether the position of the unique $1$ in $x$ is even or odd.
We now show that computing $g$ requires $\Omega(\sqrt{m})$ quantum queries.

We begin by constructing an explicit adversary matrix $\Gamma$ to lower bound $\Adv^{\pm}(g)$ defined previously in~\eqref{eq:adv-def}.
We index the rows and columns of $\Gamma$ by the positions
$j,k\in\mathbb{Z}_m$ and define
\begin{equation}\label{eq:Gamma-position-parity}
    \Gamma_{j,k}
    \coloneqq
    \begin{cases}
        1,
        &
        g(e_j)\neq g(e_k),
        \\[2mm]
        0,
        &
        g(e_j)=g(e_k).
    \end{cases}
\end{equation}
By construction, $\Gamma$ is a nonzero Hermitian matrix and $\Gamma_{j,k}=0$ whenever $g(e_j)=g(e_k)$, so $\Gamma$ is a valid adversary matrix for $g$ and is admissible in~\eqref{eq:adv-def}.

Since $m$ is even, the promise set $D_m$ contains exactly $m/2$ inputs $e_j$ with $g(e_j)=0$, corresponding to even positions $j$, and exactly $m/2$ inputs $e_j$ with $g(e_j)=1$, corresponding to odd positions $j$.
If the inputs are ordered so that those satisfying $g(e_j)=0$ come first and those satisfying $g(e_j)=1$ come second, then the rows and columns of $\Gamma$ can be arranged in the block form
\begin{equation}\label{eq:Gamma-block}
    \Gamma
    =
    \begin{pmatrix}
        0 & J_{m/2}\\
        J_{m/2} & 0
    \end{pmatrix},
\end{equation}
where $J_{m/2}$ denotes the $(m/2)\times(m/2)$ all-ones matrix. 
The square of this matrix is block diagonal:
\begin{equation}
    \Gamma^2 = \begin{pmatrix} J_{m/2}^2 & 0 \\ 0 & J_{m/2}^2 \end{pmatrix}.
\end{equation}
Each entry of $J_{m/2}^2$ is a sum of $m/2$ ones, so $J_{m/2}^2=(m/2)J_{m/2}$.
Since the all-ones matrix $J_d$ has a single nonzero eigenvalue equal to $d$, the matrix $J_{m/2}^2$ has a single nonzero eigenvalue equal to $(m/2)^2$.
It follows from the block form of $\Gamma^2$ that its two nonzero eigenvalues are both $(m/2)^2$.
Therefore, the two nonzero eigenvalues of $\Gamma$ are $\pm m/2$.
Hence the spectral norm of $\Gamma$ is
\begin{equation}\label{eq:Gamma-norm}
    \norm{\Gamma} = \frac{m}{2}.
\end{equation}

We next evaluate the norm of the Hadamard product $\norm{\Gamma\circ\Delta_q}$ for an arbitrary queried position $q\in\mathbb{Z}_m$.
Consider two input bit strings $e_j,e_k\in D_m$.
By definition, the $q$-th bit of $e_j$ satisfies $(e_j)_q=1$ if $j=q$ and $(e_j)_q=0$ otherwise, and the same holds for $e_k$.
Therefore, the two inputs differ at position $q$ if and only if exactly one of positions $j$ and $k$ is equal to $q$.
Equivalently, $\Delta_q(e_j,e_k)=1$ (see~\eqref{eq:delta-q} for the definition of $\Delta_q$) if and only if one of $j$ or $k$ equals $q$ and the other does not.

The matrix $\Gamma$ imposes one additional condition on these pairs: by~\eqref{eq:Gamma-position-parity}, it is clear that $\Gamma_{j,k}$ is nonzero only when $g(e_j)\neq g(e_k)$, or equivalently, when the positions $j$ and $k$ have opposite parity.
Therefore, $(\Gamma\circ\Delta_q)_{j,k}$ is nonzero exactly when one of $j$ or $k$ is equal to $q$ and the other has parity opposite to that of $q$.

Since $m$ is even, exactly half of the positions in $\mathbb{Z}_m$ have parity opposite to $q$.
Hence there are exactly $m/2$ such positions, and therefore exactly $m/2$ corresponding inputs $e_j\in D_m$.
Consequently, the matrix $\Gamma\circ\Delta_q$ acts as the adjacency matrix of a star graph whose center corresponds to the input $e_q$ and whose $m/2$ leaves correspond to the inputs $e_j$ whose marked positions have parity opposite to $q$.

Since the adjacency matrix of a star graph with $d$ leaves has nonzero eigenvalues $\pm\sqrt{d}$, its operator norm is $\sqrt{d}$.
In our case, $\Gamma\circ\Delta_q$ is the adjacency matrix of a star graph with $d=m/2$ leaves.
Therefore, for every $q\in\mathbb{Z}_m$, we have
\begin{equation}\label{eq:Gamma-Delta-norm}
\norm{\Gamma\circ\Delta_q}
=
\sqrt{\frac{m}{2}}.
\end{equation}

Substituting~\eqref{eq:Gamma-norm} and~\eqref{eq:Gamma-Delta-norm}
into the definition of the adversary bound $\Adv^{\pm}(g)$ in~\eqref{eq:adv-def} gives the following lemma.

\begin{lemma}[\textsc{Position-Parity} adversary bound]\label{lem:colored}
For the function $g:D_m\rightarrow\{0,1\}$ defined in~\eqref{eq:g-def}, we have $\Adv^\pm(g) \ge \sqrt{m/2}$.
\end{lemma}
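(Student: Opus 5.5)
The lemma follows by exhibiting one admissible matrix in the maximization~\eqref{eq:adv-def}. Since $\Adv^\pm(g)$ is a maximum over all admissible $\Gamma$, any single valid choice gives a lower bound. We take the matrix $\Gamma$ of~\eqref{eq:Gamma-position-parity}, indexed by the inputs $e_j\in D_m$.

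First, we check admissibility. $\Gamma$ is real and symmetric, hence Hermitian. It is nonzero because $m\ge4$ guarantees at least one even and one odd position. By definition it vanishes whenever $g(e_j)=g(e_k)$. Also, $g$ is nonconstant on $D_m$, so $\Adv^\pm(g)$ is defined.

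Second, we compute the numerator. From the block form~\eqref{eq:Gamma-block}, $\Gamma^2$ is block diagonal with blocks $(m/2)J_{m/2}$, so $\norm{\Gamma}=m/2$ as in~\eqref{eq:Gamma-norm}.

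Third, we compute the denominator. Fix an arbitrary $q\in\mathbb Z_m$. The inputs $e_j$ and $e_k$ differ at position $q$ exactly when precisely one of $j,k$ equals $q$. Combined with the support of $\Gamma$, this makes $\Gamma\circ\Delta_q$ the adjacency matrix of a star. Its center is $e_q$, and its leaves are the $m/2$ inputs whose marked position has parity opposite to $q$. For a star with $d$ leaves, let $u$ be the center indicator and $w$ the normalized leaf indicator. The matrix acts as $\sqrt d$ times the swap on $\mathrm{span}\{u,w\}$ and as zero on the orthogonal complement. This gives $\norm{\Gamma\circ\Delta_q}=\sqrt{m/2}$ for every $q$, so the maximum over $q$ in~\eqref{eq:adv-def} also equals $\sqrt{m/2}$.

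Finally, we divide: $\Adv^\pm(g)\ge (m/2)/\sqrt{m/2}=\sqrt{m/2}$. The only step needing any care is the star-graph identification of $\Gamma\circ\Delta_q$. In particular, we must confirm that there are no extra nonzero entries between two leaves. Two distinct leaves $e_j,e_k$ with $j,k\ne q$ agree at position $q$, since both have a $0$ there, so $\Delta_q$ kills those entries. Everything else is routine.
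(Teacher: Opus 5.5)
Your proposal is correct and follows the paper's proof essentially step for step. It uses the same admissible matrix $\Gamma$, computes $\norm{\Gamma}=m/2$ from the block-diagonal $\Gamma^2$, and identifies $\Gamma\circ\Delta_q$ as a star with $m/2$ leaves to get $\sqrt{m/2}$. Your explicit rank-two description of the star and your check that two leaves are never joined are small details that the paper leaves implicit.
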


\begin{proof}
The matrix $\Gamma$ in~\eqref{eq:Gamma-position-parity} is feasible, so~\eqref{eq:adv-def},~\eqref{eq:Gamma-norm}, and~\eqref{eq:Gamma-Delta-norm} give
\begin{equation}
    \Adv^\pm(g) \ge \frac{\norm{\Gamma}}{\displaystyle \max_{q\in\mathbb{Z}_m} \norm{\Gamma\circ\Delta_q}} = \frac{m/2}{\sqrt{m/2}} = \sqrt{\frac{m}{2}}.
\end{equation}
\end{proof}

For the remainder of the proof, we will use the adversary bound in Lemma~\ref{lem:colored} directly, since this is the quantity required by the XOR lemma. 
Nevertheless, it also immediately implies the bounded-error quantum query complexity of the \textsc{Position-Parity} problem. 
For any fixed $0\le\eta<1/2$, set $k=1$ and $\delta=(1-2\eta)^2$ in Lemma~\ref{lem:xor}.
The error probability threshold is then $\eta$, and the lemma gives $Q_\eta(g)\ge(1-2\eta)^2\Adv^\pm(g)/8$.
The coefficient is a positive constant, so $Q_\eta(g)=\Omega(\sqrt m)$.
This settles the quantum query complexity of the $\textsc{Position-Parity}$ problem.

\section{A Unitary for a Single \texorpdfstring{\textnormal{\textsc{Position-Parity}}}{Position-Parity} Instance}\label{sec:selector}

We now construct a unitary $K_x$ associated with one input $x\in D_m$.
This matrix acts on an $m$-dimensional register with computational basis states $\{\ket a:a\in\mathbb Z_m\}$ and must satisfy two properties that are needed later in the reduction: its action on a suitable two-dimensional logical subspace encodes the value $g(x)$, while its nonzero locations remain independent of $x$ and each matrix entry depends on at most one bit of the input string $x$.

Let $S$ denote the cyclic shift on this $m$-dimensional register, defined as $S\ket{a}=\ket{a+1\bmod m}$ for all $a\in\mathbb{Z}_m$, and let the state
$\ket{u}=m^{-1/2}\sum_{a=0}^{m-1}\ket{a}$ denote the uniform superposition state over the computational basis states.
For every pair $a,b\in\mathbb{Z}_m$, observe that there is exactly one cyclic shift $S^j$ with power $j\in\mathbb{Z}_m$ such that $S^j\ket{b}=\ket{a}$.
Therefore, the sum of all powers of $S$ is the all-ones matrix.
This relation can be written using the state $\ket{u}$ as
\begin{equation}\label{eq:sum-shifts}
\sum_{j=0}^{m-1}S^j=m\ket{u}\!\bra{u}.
\end{equation}

With that in place, for an input $x\in D_m$, we define the matrix $K_x$ as the following linear combination of the powers of $S$:
\begin{equation}\label{eq:K-def}
    K_x
    \coloneqq
    \sum_{j=0}^{m-1}
    \left(
        x_j-\frac{2}{m}
    \right)
    S^{j+1}.
\end{equation}
Suppose the unique marked position is $j_*$, so $x=e_{j_*}$.
The part of~\eqref{eq:K-def} containing $x_j$ is then $S^{j_*+1}$.
Also, $S^m=I$, so the shifts $S^1,\ldots,S^m$ have the same sum as $S^0,\ldots,S^{m-1}$.
It follows that
\begin{equation}
    K_x
    =
    S^{j_*+1}
    -
    \frac{2}{m}
    \sum_{j=0}^{m-1}S^{j+1}
    =
    S^{j_*+1}
    -
    2\ket{u}\!\bra{u}.
    \label{eq:K-first-factor}
\end{equation}

Next, observe that the uniform state $\ket{u}$ is invariant under every cyclic shift.  
Indeed, applying $S^j$ merely permutes the computational basis states appearing with equal amplitude in $\ket{u}$, so $S^j\ket{u}=\ket{u}$ for every $j\in\mathbb{Z}_m$. 
In particular, $S^{j_*+1}\ket{u} = \ket{u}$, or equivalently $ S^{j_*+1}\ket{u}\!\bra{u} = \ket{u}\!\bra{u}$.
By substituting this identity into the expression above, we can factor $K_x$:
\begin{equation}\label{eq:K-factor}
    K_x
    =
    S^{j_*+1}
    \left(
        I-2\ket{u}\!\bra{u}
    \right).
\end{equation}
Written in this form, it is immediately clear that $K_x$ is unitary. 
It is the product of a reflection matrix $(I-2\ket{u}\!\bra{u})$ and a permutation matrix $S^{j_*+1}$, both of which are unitary.

We now examine the individual matrix entries of $K_x$. 
Fix $a,b\in\mathbb{Z}_m$.  
By the definition of the cyclic shift, $S^{j+1}\ket{b}=\ket{b+j+1\bmod m}$.  
Therefore $\bra{a}S^{j+1}\ket{b}$ is equal to $1$ precisely when $a\equiv b+j+1\pmod m$, or equivalently when $j\equiv a-b-1\pmod m$, and it is $0$ otherwise.  
For fixed $a$ and $b$, exactly one value of $j\in\mathbb{Z}_m$ satisfies this condition.  
Substituting this value into~\eqref{eq:K-def} gives
\begin{equation}\label{eq:K-entry}
    \bra{a}K_x\ket{b}
    =
    x_{a-b-1\bmod m}
    -
    \frac{2}{m}.
\end{equation}

The above equation makes the structure of $K_x$ clear.  
It tells us that every matrix entry is either $1-2/m$ (if the queried bit is $1$) or $-2/m$ (if the queried bit is $0$). Since $m\ge 4$, neither of these values is ever zero.  
Thus every row and every column of $K_x$ contains exactly $m$ nonzero entries.

More importantly, the locations of these nonzero entries do not depend on $x$.  
Changing the marked position changes only which one of the $m$ entries in a row has value $1-2/m$; it does not change the set of positions at which the entries are nonzero.  
This distinction is essential for the sparse-access reduction.
The sparse-position oracle can reveal only the locations of the nonzero entries, so in this construction it reveals no information about the marked position $j_*$.  
The hidden information appears only in the matrix values.  
Moreover,~\eqref{eq:K-entry} shows that the value of any entry $\bra{a}K_x\ket{b}$ is determined entirely by the single Boolean bit $x_{a-b-1\bmod m}$.

We summarize the properties of $K_x$ established above for later use.

\begin{lemma}[Properties of $K_x$]\label{prop:selector}
For every $x\in D_m$, the matrix $K_x$ in~\eqref{eq:K-def} is unitary and satisfies~\eqref{eq:K-factor}.
Moreover, its entries satisfy~\eqref{eq:K-entry}.
Hence every row and every column of $K_x$ contains exactly $m$ nonzero entries, and their locations are independent of $x$.
\end{lemma}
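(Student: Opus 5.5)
The plan is to collect the facts about $K_x$ from the preceding discussion in three steps: the factorization, unitarity, and the entry formula with its consequences.

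\emph{Factorization.} Fix $x\in D_m$ and write $x=e_{j_*}$ for the unique marked position $j_*$. In the sum~\eqref{eq:K-def}, the coefficient $x_j$ vanishes for every $j\neq j_*$, so the $x$-dependent part is just $S^{j_*+1}$. The remaining part is $-(2/m)\sum_{j=0}^{m-1}S^{j+1}$. Because $S^m=I$, the list $S^1,\ldots,S^m$ is a reindexing of $S^0,\ldots,S^{m-1}$, and~\eqref{eq:sum-shifts} turns this sum into $m\ket u\!\bra u$. This gives~\eqref{eq:K-first-factor}. Since $\ket u$ is shift invariant, $S^{j_*+1}\ket u\!\bra u=\ket u\!\bra u$, and factoring out $S^{j_*+1}$ yields~\eqref{eq:K-factor}.

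\emph{Unitarity.} This follows at once from~\eqref{eq:K-factor}. The factor $I-2\ket u\!\bra u$ is a Householder reflection, which is Hermitian and squares to $I$ because $\braket uu=1$. The factor $S^{j_*+1}$ is a permutation matrix. A product of two unitaries is unitary.

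\emph{Entry formula and consequences.} We have $\bra a S^{j+1}\ket b=1$ if and only if $j\equiv a-b-1\pmod m$, and this congruence has exactly one solution $j\in\mathbb Z_m$. So each entry picks up exactly one coefficient of~\eqref{eq:K-def}, which gives~\eqref{eq:K-entry}. That entry equals $1-2/m$ or $-2/m$. Since $m\ge4$, neither value is $0$: the only problematic case would be $m=2$, where $1-2/m=0$. Hence $K_x$ has no zero entries at all. Every row and every column therefore has exactly $m$ nonzero entries, and their positions (all of them) do not depend on $x$.

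None of these steps is a real obstacle. The only points that need care are the reindexing that uses $S^m=I$ and the check that $m\ge4$ rules out a vanishing entry.
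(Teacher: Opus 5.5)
Your proposal is correct and follows the paper's argument essentially step for step. Like the paper, you reindex using $S^m=I$ and then apply~\eqref{eq:sum-shifts} and the shift invariance of $\ket u$ to reach~\eqref{eq:K-factor}, obtain unitarity as a permutation times a reflection, and use the unique solution of $j\equiv a-b-1 \pmod m$ to get~\eqref{eq:K-entry}, with $m\ge4$ ruling out zero entries.
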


We will later use both $K_x$ and its inverse when we reverse the computation.  
Since all entries of $K_x$ are real, we have $K_x^\dagger=K_x^T$.  
Therefore, by exchanging $a$ and $b$ in~\eqref{eq:K-entry}, we get $\bra{a}K_x^\dagger\ket{b}=x_{b-a-1\bmod m}-2/m$.
Hence an entry of either $K_x$ or $K_x^\dagger$ is determined by a single bit of $x$.

It remains to explain how the action of $K_x$ encodes the answer $g(x)$.
To define this logical subspace, we separate the computational basis states according to the parity of their indices and set
\begin{equation}\label{eq:logical-states}
    \ket{0_L}
    \coloneqq
    \sqrt{\frac{2}{m}}
    \sum_{\substack{a\in\mathbb{Z}_m\\a\ {\rm even}}}
    \ket{a},
    \qquad
    \ket{1_L}
    \coloneqq
    \sqrt{\frac{2}{m}}
    \sum_{\substack{a\in\mathbb{Z}_m\\a\ {\rm odd}}}
    \ket{a}.
\end{equation}
Because $m$ is even, there are exactly $m/2$ even indices and $m/2$ odd indices.
The prefactor $\sqrt{2/m}$ therefore normalizes both states.
Moreover, $\ket{0_L}$ and $\ket{1_L}$ have support on disjoint sets of computational basis states, so they are orthogonal.
Hence $\{\ket{0_L},\ket{1_L}\}$ forms an orthonormal basis for a two-dimensional logical subspace.

We now determine how the two factors of $K_x$ (see~\eqref{eq:K-factor}) act on this subspace.
First consider the reflection $I-2\ket{u}\!\bra{u}$.
Since the even and odd basis states together contain all $m$ computational basis states, the uniform state can be written as
\begin{equation}\label{eq:u-logical}
    \ket{u}
    =
    \frac{\ket{0_L}+\ket{1_L}}{\sqrt{2}}.
\end{equation}
It follows that $\braket{u}{0_L}=\braket{u}{1_L}=1/\sqrt{2}$.
Therefore,
\begin{equation}
    \left(I-2\ket{u}\!\bra{u}\right)\ket{0_L}
    =
    \ket{0_L}
    -
    \sqrt{2}\ket{u}
    =
    -\ket{1_L},
\end{equation}
and, by the same calculation,
$\left(I-2\ket{u}\!\bra{u}\right)\ket{1_L}=-\ket{0_L}$.
Thus the reflection swaps the logical states and applies a minus sign:
\begin{equation}\label{eq:reflection-logical}
    \left(I-2\ket{u}\!\bra{u}\right)\ket{z_L}
    =
    -\ket{(z\oplus1)_L},
    \qquad
    z\in\{0,1\}.
\end{equation}

Next consider the cyclic shift $S^d$.
If $d$ is even, adding $d$ modulo $m$ preserves the parity of every basis index, so $S^d$ maps the even positions among themselves and the odd positions among themselves.
Hence it leaves both logical states unchanged.
If $d$ is odd, adding $d$ reverses the parity of every basis index, so $S^d$ exchanges the even and odd positions and therefore exchanges $\ket{0_L}$ and $\ket{1_L}$.
Both cases can be written compactly as
\begin{equation}\label{eq:shift-logical}
    S^d\ket{z_L}
    =
    \ket{(z\oplus(d\bmod2))_L}.
\end{equation}

We can now combine these two actions to determine the action of $K_x$.
Let $x=e_{j_*}$.
From~\eqref{eq:K-factor}, $K_x$ first applies the reflection $I-2\ket{u}\!\bra{u}$ and then the shift $S^{j_*+1}$.
The reflection $I-2\ket{u}\!\bra{u}$ maps $\ket{z_L}$ to $-\ket{(z\oplus1)_L}$.
The subsequent shift changes the logical bit by the parity of $j_*+1$.
Since $g(x)=j_*\bmod2$, we have
$(j_*+1)\bmod2=g(x)\oplus1$.
Therefore the two logical flips combine as
\begin{equation}
    1\oplus\bigl((j_*+1)\bmod2\bigr)
    =
    1\oplus g(x)\oplus1
    =
    g(x).
\end{equation}
Substituting this into the action of the two factors gives
\begin{equation}\label{eq:logical-action}
    K_x\ket{z_L}
    =
    -\ket{(z\oplus g(x))_L},
    \qquad
    z\in\{0,1\}.
\end{equation}

Thus, when restricted to the logical subspace
$\operatorname{span}\{\ket{0_L},\ket{1_L}\}$, the unitary $K_x$ XORs the value $g(x)$ into the logical state.
The remaining minus sign is independent of $z$ and is therefore an overall phase for a single application of $K_x$.

\section{A Clock Unitary for \texorpdfstring{$n$}{n} \texorpdfstring{\textnormal{\textsc{Position-Parity}}}{Position-Parity} Instances}\label{sec:clock}

We next combine $n$ independent instances of the \textsc{Position-Parity} problem.
Let $X=(x^{(1)},\ldots,x^{(n)})\in D_m^n$ denote these $n$ \textsc{Position-Parity} instances.
The goal is to compute the XOR of all $n$ instances, that is,
\begin{equation}\label{eq:def-FX}
F(X)=\bigoplus_{t=1}^{n}g(x^{(t)}).
\end{equation}
We take $n$ to be even, with its specific choice given later in~\eqref{eq:n-choice}, and construct a $3n$-step reversible computation that first computes $F(X)$, then keeps the resulting logical state unchanged, and finally uncomputes the first stage.

The computation acts on two registers.
The first is the $m$-dimensional register introduced in the previous section, whose logical subspace is spanned by $\{\ket{0_L},\ket{1_L}\}$ and on which the unitaries $K_{x^{(t)}}$ act.
We refer to this register as the logical register in what follows.
We introduce a second register, called the clock register, with basis states
$\{\ket{t}_c:0\le t<3n\}$.
The clock value $t$ records the current step of the computation.
For each $t\in\{0,\ldots,3n-1\}$, let $L_t$ denote the operation applied to the first register as the clock advances from $t$ to $t+1$:
\begin{equation}\label{eq:L-def}
    L_t
    \coloneqq
    \begin{cases}
        K_{x^{(t+1)}}, & 0\le t<n,\\[1mm]
        I, & n\le t<2n,\\[1mm]
        K_{x^{(3n-t)}}^\dagger, & 2n\le t<3n.
    \end{cases}
\end{equation}
Thus, the first $n$ steps apply $K_{x^{(1)}},\ldots,K_{x^{(n)}}$ in sequence, the next $n$ steps apply the identity, and the final $n$ steps uncompute the first $n$ steps by applying $K_{x^{(n)}}^\dagger,\ldots,K_{x^{(1)}}^\dagger$ in reverse order.

We combine these $3n$ operations into a single clock unitary $B_X$, defined as follows, by coupling the logical register to the clock register:
\begin{equation}\label{eq:B-def}
    B_X
    \coloneqq
    \sum_{t=0}^{3n-1}
    L_t\otimes
    \ket{t+1\bmod 3n}\!\bra{t}_c.
\end{equation}
Thus, if the logical register is in an arbitrary state $\ket{\phi}$ and the clock register is in state $\ket{t}$, then $B_X\ket{\phi}\ket{t}_c=L_t\ket{\phi}\ket{t+1\bmod3n}_c$.
Hence one application of $B_X$ applies the operation $L_t$ associated with the current clock value $t$ and then advances the clock by one step.

Note that each $L_t$ is unitary: in the first and final intervals this follows from the unitarity of $K_x$, and in the middle interval this follows from the fact that $L_t=I$.
Moreover, the clock update $t\mapsto t+1\bmod3n$ is a permutation of the clock basis states.
These two facts already imply that $B_X$ is unitary.
Indeed, using the orthogonality of the clock basis states, the product $B_X^\dagger B_X$ reduces to
\begin{equation}
    \sum_{t=0}^{3n-1}L_t^\dagger L_t\otimes\ket{t}\!\bra{t}_c=I,
\end{equation}
and similarly $B_XB_X^\dagger=I$.

We next verify the periodicity property for $B_X$ which is $B_X^{3n}=I$.
We will use this property later in the next section when we invert the matrix $I-rB_X$.
Starting from clock value $0$, one complete cycle applies the operations
$L_0,L_1,\ldots,L_{3n-1}$ sequentially.
Because the final stage is the inverse of the first stage in reverse order, their ordered product is
\begin{equation}
    L_{3n-1}\cdots L_1L_0
    =
    K_{x^{(1)}}^\dagger\cdots K_{x^{(n)}}^\dagger
    K_{x^{(n)}}\cdots K_{x^{(1)}}
    =
    I.
\end{equation}
Thus one complete traversal of the clock returns both the clock and logical registers to their starting states when the initial clock value is $0$.

To conclude that $B_X^{3n}=I$, however, we must verify the same statement for every possible initial clock value.
Fix $t\in\mathbb{Z}_{3n}$ and write $P_t=L_{t-1}\cdots L_0$ and $Q_t=L_{3n-1}\cdots L_t$, with $P_0=I$.
The above identity then gives $Q_tP_t=I$.
Because $P_t$ and $Q_t$ are both unitaries, this implies that $P_tQ_t=I$.
But $P_tQ_t$ is exactly the operation obtained by starting at clock value $t$, advancing through $t,t+1,\ldots,3n-1$, and then continuing through $0,\ldots,t-1$.
After these $3n$ steps, the clock has also returned to $t$.
Therefore
\begin{equation}\label{eq:B-period}
    B_X^{3n}=I
\end{equation}
on the full logical-clock Hilbert space.

We now follow the particular computation history that will appear in the QLS solution.
Starting from the initial quantum state
\begin{equation}\label{eq:b0-def}
    \ket{b_0}
    \coloneqq
    \ket{0_L}\ket{0}_c,
\end{equation}
we define the sequence of states
\begin{equation}\label{eq:beta-def}
    \ket{\beta_\ell}
    \coloneqq
    B_X^\ell\ket{b_0},
    \qquad
    0\le\ell<3n.
\end{equation}
We refer to the states $\ket{\beta_0},\ldots,\ket{\beta_{3n-1}}$ as the computation-history states.
By construction, $\ket{\beta_\ell}$ has clock register in the state $\ket{\ell}_c$.
Hence the states $\ket{\beta_0},\ldots,\ket{\beta_{3n-1}}$ are mutually orthogonal.
Since $B_X$ is unitary and $\ket{b_0}$ is normalized, each $\ket{\beta_\ell}$ is also normalized, so these states form an orthonormal set.

It remains to determine the state of the logical register during the middle interval $t\in\{n,\ldots,2n-1\}$, where the computation will contain the target Boolean value $F(X)$ defined in~\eqref{eq:def-FX}.
After the first $n$ clock steps, the logical register is in state
$K_{x^{(n)}}\cdots K_{x^{(1)}}\ket{0_L}$.
By~\eqref{eq:logical-action}, each application of
$K_{x^{(t)}}$ XORs the bit $g(x^{(t)})$ into the logical state and contributes a factor of $-1$.
Applying the $n$ unitaries successively therefore gives
\begin{equation}
    K_{x^{(n)}}\cdots K_{x^{(1)}}\ket{0_L}
    =
    (-1)^n
    \ket{
        (
            \bigoplus_{t=1}^{n}g(x^{(t)})
        )_L
    }
    =
    (-1)^n\ket{F(X)_L}.
\end{equation}
Since $n$ is an even number, the phase is $(-1)^n=1$, and hence
$\ket{\beta_n}=\ket{F(X)_L}\ket{n}_c$.

For the next $n$ clock steps,~\eqref{eq:L-def} gives $L_t=I$.
The logical register therefore remains in the state $\ket{F(X)_L}$ throughout the entire middle interval:
\begin{equation}\label{eq:middle-beta}
    \ket{\beta_\ell}
    =
    \ket{F(X)_L}\ket{\ell}_c,
    \qquad
    n\le\ell<2n.
\end{equation}

\section{The Hard Sparse Linear System}\label{sec:linear}

We now convert the clock unitary $B_X$ constructed in Section~\ref{sec:clock} into a linear system. 
The key observation is that powers of $B_X$ generate the computation-history states $\ket{\beta_\ell}$ defined in~\eqref{eq:beta-def}.
We therefore construct a matrix $A_X$ such that its inverse $A_X^{-1}$ can be expanded as a series of powers of $B_X$.
When this inverse is applied to $\ket{b_0}$, each power $B_X^\ell$ produces the corresponding computation-history state $\ket{\beta_\ell}$, so the solution of the linear system $A_X\ket{y}=\ket{b_0}$ becomes a weighted superposition of the computation-history states.

For a decay parameter $0<r<1$, define
\begin{equation}\label{eq:A-def}
A_X
\coloneqq
I-rB_X.
\end{equation}
Since $B_X$ is unitary, $\norm{rB_X}=r<1$.
Therefore $A_X$ is invertible, and its inverse is given by the convergent Neumann series
\begin{equation}
    A_X^{-1}=\sum_{t=0}^{\infty}r^tB_X^t
\end{equation}

We apply this inverse $A_X^{-1}$ to the initial state $\ket{b_0}$ from~\eqref{eq:b0-def}.
Recall that $B_X^\ell\ket{b_0}=\ket{\beta_\ell}$ and that
$B_X^{3n}=I$ by~\eqref{eq:B-period}.
Every integer $t\ge0$ can be written uniquely as $t=3nk+\ell$, where
$k\ge0$ and $0\le\ell<3n$.
Consequently,
$B_X^t\ket{b_0}=B_X^\ell\ket{b_0}=\ket{\beta_\ell}$.
Grouping the infinite series according to these complete clock cycles gives
\begin{align}
    A_X^{-1}\ket{b_0}
    &=
    \sum_{k=0}^{\infty}
    \sum_{\ell=0}^{3n-1}
    r^{3nk+\ell}\ket{\beta_\ell}
    \nonumber\\
    &=
    \frac{1}{1-r^{3n}}
    \sum_{\ell=0}^{3n-1}
    r^\ell\ket{\beta_\ell}.
    \label{eq:history-state}
\end{align}
Thus the solution is a weighted superposition of the $3n$ computation-history states, with the amplitude of $\ket{\beta_\ell}$ proportional to $r^\ell$.

We next determine how much of this solution lies in the middle part of the computation history.
By~\eqref{eq:middle-beta}, for every $n\le\ell<2n$ we have $\ket{\beta_\ell}=\ket{F(X)_L}\ket{\ell}_c$.
These are therefore precisely the clock values for which the Boolean value $F(X)$ can be read directly from the logical state.

Let $p$ denote the probability that the normalized state corresponding to~\eqref{eq:history-state} has a clock value in the interval $\{n,\ldots,2n-1\}$.
Because the computation-history states are orthonormal, their squared amplitudes are proportional to $r^{2\ell}$.
Also, the common normalization factor $1/(1-r^{3n})$ cancels, so
\begin{equation}
    p = \frac{\sum_{\ell=n}^{2n-1}r^{2\ell}}{\sum_{\ell=0}^{3n-1}r^{2\ell}}.
\end{equation}
Evaluating the two geometric sums gives
$p=r^{2n}(1-r^{2n})/(1-r^{6n})$.
Using
$1-r^{6n}=(1-r^{2n})(1+r^{2n}+r^{4n})$, we obtain
\begin{equation}\label{eq:p-formula}
    p
    =
    \frac{r^{2n}}
         {1+r^{2n}+r^{4n}}.
\end{equation}

We now choose $r$ and $n$ so that the probability in~\eqref{eq:p-formula} remains larger than the error allowed in the QLS solution.
At the same time, the choice of $r$ will determine the condition number of $A_X$.

Let $R\ge16$ be a power of two and set
\begin{equation}\label{eq:r-def}
r \coloneqq 1-\frac{1}{R}.
\end{equation}
Thus $0<r<1$ and $r$ approaches $1$ as $R$ increases.
It is convenient to define $\mu\coloneqq-\log r$.
Applying the elementary bound $x\le-\log(1-x)\le x/(1-x)$ with $x=1/R$ gives
\begin{equation}\label{eq:mu-bounds} 
\frac{1}{R} \le \mu \le \frac{1}{R-1}.
\end{equation} 
Hence $\mu=\Theta(1/R)$, while $r^n=e^{-\mu n}$.

For $0<\varepsilon\le1/64$, we choose the number of \textsc{Position-Parity} instances to be
\begin{equation}\label{eq:n-choice}
n \coloneqq 2 \left\lfloor \frac{\log(1/(9\varepsilon))}{4\mu} \right\rfloor.
\end{equation}
In particular, note that $n$ is even, as required in Section~\ref{sec:clock}.
We next get the upper and lower bounds on $n$, since this is where the factor $\log(1/\varepsilon)$ first enters the construction.

For the upper bound on $n$,~\eqref{eq:n-choice} gives $n\le\log(1/(9\varepsilon))/(2\mu)<\log(1/\varepsilon)/(2\mu)$.
Using $\mu\ge1/R$ from~\eqref{eq:mu-bounds}, we obtain $n<(R/2)\log(1/\varepsilon)$.

For the lower bound on $n$, the inequality $\lfloor y\rfloor>y-1$ gives $n>\log(1/(9\varepsilon))/(2\mu)-2$.
Since $\varepsilon\le1/64$, we have $\log(1/\varepsilon)\ge\log64$ and $\log9<(3/5)\log64\le(3/5)\log(1/\varepsilon)$.
Therefore $\log(1/(9\varepsilon))>(2/5)\log(1/\varepsilon)$.
Together with $1/\mu\ge R-1$ from~\eqref{eq:mu-bounds}, this gives $n>((R-1)/5)\log(1/\varepsilon)-2$.
Because $R\ge16$, we have $R-1\ge15R/16$, and hence $n>(3R/16)\log(1/\varepsilon)-2$.
Finally, $R\log(1/\varepsilon)\ge16\log64>32$, so $2<(R/16)\log(1/\varepsilon)$.
Combining both these bounds yields
\begin{equation}\label{eq:n-main-bounds}
    \frac{R}{8}\log\frac{1}{\varepsilon} < n < \frac{R}{2}\log\frac{1}{\varepsilon}.
\end{equation}
In particular, $n=\Theta(R\log(1/\varepsilon))$.
Later, after choosing $R=\Theta(\kappa)$, this will become the desired $n=\Theta(\kappa\log(1/\varepsilon))$.

We now verify that this choice of $n$ gives the required probability in the middle clock interval.
From~\eqref{eq:n-choice}, $n\le\log(1/(9\varepsilon))/(2\mu)$.
Since $r=e^{-\mu}$, it follows that $r^n=e^{-\mu n}\ge3\sqrt{\varepsilon}$.
We also need an upper bound on $r^n$ in order to control the denominator of~\eqref{eq:p-formula}.
Using the lower estimate from the floor function in~\eqref{eq:n-choice}, we have $n>\log(1/(9\varepsilon))/(2\mu)-2$, and therefore $r^n<3e^{2\mu}\sqrt{\varepsilon}$.
Equation~\eqref{eq:mu-bounds} and $R\ge16$ then imply $\mu\le1/15$.
Together with $\varepsilon\le1/64$, this gives $r^n<(3/8)e^{2/15}<1/2$.

Hence $r^{2n}<1/4$ and $r^{4n}<1/16$, so the denominator in~\eqref{eq:p-formula} is smaller than $1+1/4+1/16<3/2$.
Using also $r^n\ge3\sqrt{\varepsilon}$, we conclude that
\begin{equation}\label{eq:p-lower}
    p > \frac{2}{3}r^{2n} \ge 6\varepsilon.
\end{equation}
Thus, although the amplitudes in~\eqref{eq:history-state} decay as the clock advances, the total probability of the middle interval carrying $\ket{F(X)_L}$ remains strictly larger than $6\varepsilon$.

We next determine the condition number of $A_X$.
Let $\ket{v}$ be an eigenvector of $B_X$ with eigenvalue $\lambda$.
Since $B_X$ is unitary, $|\lambda|=1$, and~\eqref{eq:A-def} gives $A_X\ket{v}=(1-r\lambda)\ket{v}$.
Moreover, $A_X$ is a polynomial in the unitary $B_X$ and is therefore normal.
Its singular values are consequently the quantities $|1-r\lambda|$. 
For every $|\lambda|=1$, the triangle and reverse triangle inequalities give $1-r\le|1-r\lambda|\le1+r$.

It remains to check that both endpoints actually occur.
On the subspace spanned by the computation-history states
$\ket{\beta_0},\ldots,\ket{\beta_{3n-1}}$, the unitary $B_X$ acts as a cyclic shift of length $3n$.
The eigenvalues of such a shift are the $3n$-th roots of unity.
In particular, $+1$ is always an eigenvalue, and since $n$ is even, $3n$ is even, so $\e^{\iu \pi} = -1$ is also an eigenvalue.
Therefore the two extreme singular values of $A_X$ are attained, and
\begin{equation}\label{eq:kappa-A}
    \sigma_{\min}(A_X) = 1-r = \frac{1}{R},
    \qquad \sigma_{\max}(A_X) = 1+r,
    \qquad \kappa(A_X) = 2R-1.
\end{equation}

The matrix $A_X$ is real but need not be symmetric.
We therefore convert it into the real symmetric instance used in the final QLS reduction.
For that, we define $\overline A_X=A_X/2$ and
\begin{equation}\label{eq:H-def}
    H_X \coloneqq
    \begin{pmatrix}
        0 & \overline A_X\\
        \overline A_X^T & 0
    \end{pmatrix},
    \qquad \ket{b} \coloneqq
    \begin{pmatrix}
        \ket{b_0}\\
        0
    \end{pmatrix}.
\end{equation}
The factor $1/2$ ensures that $\norm{H_X}<1$, which is just a convenient normalization, and it does not change the condition number.

To see the spectrum of $H_X$, note that $H_X^2$ is block diagonal with blocks $\overline A_X\overline A_X^T$ and $\overline A_X^T\overline A_X$.
Its eigenvalues are therefore the squared singular values of $\overline A_X$, and the eigenvalues of $H_X$ are $\pm\sigma_j(\overline A_X)$.
Here $\sigma_j(A)$ denotes the $j$-th singular value of $A$.
Consequently,
\begin{equation}\label{eq:H-kappa}
    \kappa(H_X) = \kappa(\overline A_X) = \kappa(A_X) = 2R-1,
\end{equation}
while $\norm{H_X}=\norm{\overline A_X}=(1+r)/2<1$.

We also verify that such embedding of the matrix $A_X$ into $H_X$ preserves the solution state that contains the computation history.
Write $H_X^{-1}\ket b=(\ket{y_1},\ket{y_2})^T$.
Then $H_X(\ket{y_1},\ket{y_2})^T=\ket b$ is equivalent to $\overline A_X\ket{y_2}=\ket{b_0}$ and $\overline A_X^T\ket{y_1}=0$.
Since $\overline A_X$ is invertible, we must have $\ket{y_1}=0$ and $\ket{y_2}=\overline A_X^{-1}\ket{b_0}=2A_X^{-1}\ket{b_0}$.
Hence
\begin{equation}\label{eq:H-inverse-b}
    H_X^{-1}\ket b = 2
    \begin{pmatrix}
        0\\
        A_X^{-1}\ket{b_0}
    \end{pmatrix}.
\end{equation}
Note that the factor $2$ disappears upon normalization.
Therefore, the nonzero block of the normalized QLS solution for $H_X$ is exactly the normalized weighted superposition of computation-history states in~\eqref{eq:history-state}.

We finally verify that $H_X$ can be accessed efficiently in the sparse-access model.

\begin{lemma}\label{lem:sparse}
For every $X=(x^{(1)},\ldots,x^{(n)})\in D_m^n$, every row and every column of $H_X$ contains at most $m+1$ nonzero entries.
Moreover, the sparse-position oracle $O_{H_X}^{\mathrm{pos}}$ can be implemented without querying $O_X$, while $O_{H_X}^{\mathrm{val}}$ and its inverse can each be implemented using at most two queries to $O_X$.
If $m$ and $R$ are powers of two, every entry of $H_X$ is an exact dyadic rational.
\end{lemma}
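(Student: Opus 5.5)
We make the structure of $H_X$ explicit entry by entry. Index the rows and columns of $A_X=I-rB_X$ by pairs $(a,t)$, with $a\in\mathbb Z_m$ the logical index and $t\in\mathbb Z_{3n}$ the clock value. From~\eqref{eq:B-def}, the entry $\bra{a,t'}B_X\ket{b,t}$ is nonzero only if $t'=t+1\bmod 3n$, and in that case it equals $\bra{a}L_t\ket{b}$. There are three cases for $L_t$:
\begin{itemize}
\item if $0\le t<n$, then $L_t=K_{x^{(t+1)}}$ and the entry is $x^{(t+1)}_{a-b-1\bmod m}-2/m$, by~\eqref{eq:K-entry};
\item if $n\le t<2n$, then $L_t=I$ and the entry is $\delta_{ab}$;
\item if $2n\le t<3n$, then $L_t=K^\dagger_{x^{(3n-t)}}$ and the entry is $x^{(3n-t)}_{b-a-1\bmod m}-2/m$.
\end{itemize}
Row $(a,t')$ of $A_X$ therefore has the diagonal entry $1$ from $I$ (note $3n\ge 6$, so $t'\ne t'-1$), plus the entries of row $a$ of $-rL_{t'-1}$ in clock block $t'-1$. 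By Lemma~\ref{prop:selector}, the latter number at most $m$, and their locations are independent of $X$. So each row has at most $m+1$ nonzeros, and the same argument applied to column $(b,t)$ gives the same count for columns. The nonzero entries of $H_X$ in~\eqref{eq:H-def} consist of those of $\overline A_X$ in the top-right block and those of $\overline A_X^T$ in the bottom-left block, so every row and column of $H_X$ still has at most $m+1$ nonzeros.

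For the position oracle, we fix a data-independent enumeration. For a top-half row $(a,t')$, sparse index $\ell=1$ points to the diagonal column $(a,t')$ of $\overline A_X$, shifted into the second half. Indices $\ell=2,\dots,m+1$ point to the columns $(b,t'-1)$ with $b=\ell-2$, including, in the identity stage, entries that happen to be zero, which the model allows. For bottom-half rows, which correspond to rows of $\overline A_X^T$, we use the analogous enumeration over columns $(b,t'+1)$. This enumeration uses only $n$, $m$, and the indices, so $O^{\mathrm{pos}}_{H_X}$ is a fixed reversible classical circuit and makes no queries to $O_X$.

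For the value oracle on input $\ket{i,j,z}$, we first compute classically, and reversibly, whether $(i,j)$ lies in an off-diagonal block, which clock blocks are involved, and, when relevant, the instance index $s$ and the bit position $q$. In the $K$ case, $q=a-b-1$ with $s=t+1$; in the $K^\dagger$ case, $q=b-a-1$ with $s=3n-t$; the transpose block simply swaps the roles of row and column. Next we make one query $O_X\ket{s,q,w}$ to load the bit $x^{(s)}_q$ into an ancilla. We then XOR into $z$ the binary string of the entry, a classical function of that bit and the indices: it is $\tfrac12(1-r)$ on the diagonal, $-\tfrac r2(x-2/m)$ in the $K$ stages, $-\tfrac r2\delta_{ab}$ in the identity stage, and $0$ elsewhere. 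Finally we make a second query to uncompute the ancilla bit and uncompute the classical scratch. This uses two queries, and since the oracle is self-inverse, its inverse costs the same.

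For the dyadic claim, take $r=1-1/R$ with $R$ a power of two and $m$ a power of two. Then $r/2$, $(1-r)/2=1/(2R)$, $2/m$, and $1-2/m$ are all dyadic rationals, and so are their products. Hence every entry of $H_X$ is an exact dyadic rational.

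The main obstacle is the uniform bookkeeping: we must give a single $X$-independent enumeration that correctly handles the wraparound $t'=0\leftarrow 3n-1$, the identity stage (where padding zeros appear), and the transposed block, and we must check that $\mathrm{pos}$ is consistent with $\mathrm{val}$. The query count itself is straightforward once each entry is seen to depend on exactly one bit.
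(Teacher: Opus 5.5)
Your proof follows the paper's almost step for step. You use the same entry formula for $\overline A_X$, the same count (one diagonal entry plus one clock block), an $X$-independent support for the position oracle, the same two-query value oracle, and the same dyadic check. Your explicit enumeration for $O^{\mathrm{pos}}_{H_X}$, including the wraparound and the transposed block, is correct and more detailed than the paper's.

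There is one concrete error. In the value oracle you set the diagonal entry of $\overline A_X$ to $\tfrac12(1-r)$, but it is $\tfrac12$. The matrix $B_X$ has no diagonal entries, because $\bra{a,t'}B_X\ket{b,t}\neq0$ forces $t'=t+1\bmod 3n\neq t$. You made exactly this observation in your row count. In the identity stage, $L_t=I$ contributes $-\tfrac r2\delta_{ab}$ only in the off-diagonal clock block $(t+1,t)$, which you already list separately. As written, your simulated oracle would encode $\tfrac12\bigl((1-r)I-rB_X\bigr)$ rather than $\overline A_X$, which is a different linear system with a different solution and condition number. The fix is to replace $\tfrac12(1-r)$ by $\tfrac12$, and to list $\tfrac12$ rather than $1/(2R)$ in the dyadic check. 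The dyadic conclusion, the sparsity bound and the two-query count are unaffected.

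A smaller point: for entries that do not depend on $X$, specify a fixed valid default address, as the paper does with $(1,0)$. Then the query $O_X\ket{s,q,w}$ is well defined on every branch of the superposition. Leaving the address registers at an out-of-range value such as instance index $0$ does not give a valid query.
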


\begin{proof}
It is enough to analyze the matrix $\overline A_X=(I-rB_X)/2$, since $H_X$ consists of $\overline A_X$ and its transpose $\overline A_X^T$ in the two off-diagonal blocks (see~\eqref{eq:H-def}).
Let us begin by fixing some notation.
For logical-register indices $a,b\in\mathbb{Z}_m$ and clock-register indices $t,t'\in\mathbb{Z}_{3n}$, we label the rows of $\overline A_X$ by pairs $(a,t')$ and its columns by pairs $(b,t)$.
With this notation in place,~\eqref{eq:A-def} and~\eqref{eq:B-def} give
\begin{equation}\label{eq:Abar-entry}
    (\overline A_X)_{(a,t'),(b,t)}
    =
    \frac{1}{2}\delta_{t',t}\delta_{a,b}
    -
    \frac{r}{2}
    \delta_{t',t+1\bmod3n}
    \bra{a}L_t\ket{b}.
\end{equation}
The first term comes from the identity matrix and is nonzero only when the row and column labels are the same, that is, when $t'=t$ and $a=b$.
The second term comes from $B_X$ and is nonzero only when the row clock-register index $t'$ is the successor of the column clock-register index $t$, namely $t'=t+1\bmod 3n$. 
For this pair of clock-register indices, the corresponding matrix element on the logical register is $\bra{a}L_t\ket{b}$, where $L_t$ is the operation defined in~\eqref{eq:L-def}.

Now recall that we already determined the matrix entries $\bra{a}L_t\ket{b}$ in Section~\ref{sec:selector}.
During the first stage, $L_t=K_{x^{(t+1)}}$, so~\eqref{eq:K-entry} gives $\bra{a}L_t\ket{b} =x^{(t+1)}_{a-b-1\bmod m}-2/m$.
During the middle stage, $L_t=I$, so $\bra{a}L_t\ket{b}=\delta_{a,b}$.
During the final stage, $L_t=K_{x^{(3n-t)}}^\dagger$, and the entry formula following Lemma~\ref{prop:selector} gives $\bra{a}L_t\ket{b}=x^{(3n-t)}_{b-a-1\bmod m}-2/m$.

We first bound the number of nonzero entries in each row of $\overline A_X$.
Fix a row $(a,t')$.
From~\eqref{eq:Abar-entry}, the first term, $\frac{1}{2}\delta_{t',t}\delta_{a,b}$, is nonzero only when $t=t'$ and $b=a$.
Thus, for the fixed row $(a,t')$, this term contributes exactly one possible nonzero entry, namely the diagonal entry in the column~$(a,t')$.

Now consider the second term in~\eqref{eq:Abar-entry}, that is, $-\frac{r}{2}\delta_{t',t+1\bmod3n}\bra{a}L_t\ket{b}$.
For the fixed row clock-register index $t'$, the factor $\delta_{t',t+1\bmod3n}$ can be nonzero for only one value of the column clock-register index, namely $t=t'-1\bmod3n$.
Therefore, for the fixed row $(a,t')$, the second term can be nonzero only in the $m$ columns $(b,t'-1\bmod3n)$ with $b\in\mathbb{Z}_m$.
For these columns, whether the corresponding matrix entry is nonzero is determined entirely by $\bra{a}L_{t'-1}\ket{b}$.

Suppose first that the clock-register index $t'-1\bmod3n$ lies in the first or final stage of the computation.
Then~\eqref{eq:L-def} shows that $L_{t'-1}$ is either some $K_x$ or some $K_x^\dagger$.
By Lemma~\ref{prop:selector}, every row of both $K_x$ and $K_x^\dagger$ contains exactly $m$ nonzero entries.
Hence, as $b$ ranges over $\mathbb{Z}_m$, the second term in~\eqref{eq:Abar-entry} contributes at most $m$ nonzero entries to the fixed row $(a,t')$.
Together with the single diagonal entry from the first term, this gives at most $m+1$ nonzero entries.

If instead the clock-register index $t'-1\bmod3n$ lies in the middle stage, then from~\eqref{eq:L-def}, we have that $L_{t'-1}=I$.
In this case, $\bra{a}L_{t'-1}\ket{b}=\delta_{a,b}$, so the second term is nonzero only for $b=a$.
The row then contains at most one nonzero entry from the second term together with the single diagonal entry from the first term, and therefore certainly no more than $m+1$ nonzero entries.
We conclude that every row of $\overline A_X$ contains at most $m+1$ nonzero entries.

The same argument applies to the columns of $\overline A_X$. 
For a fixed column $(b,t)$, the first term in~\eqref{eq:Abar-entry} contributes at most one nonzero entry, while the second term can be nonzero only for the unique clock-register index $t'=t+1\bmod 3n$. 
The remaining dependence is therefore determined by a single column of $L_t$. 
By Lemma~\ref{prop:selector}, every column of $K_x$ and $K_x^\dagger$ contains exactly $m$ nonzero entries, while every column of $I$ contains exactly one. 
Hence every column of $\overline A_X$ also contains at most $m+1$ nonzero entries.

It follows immediately that $\overline A_X^T$ has the same row and column sparsity. Since $H_X$ contains $\overline A_X$ and $\overline A_X^T$ in separate off-diagonal blocks, each row and each column of $H_X$ receives entries from only one of these blocks. Therefore every row and every column of $H_X$ contains at most $m+1$ nonzero entries.

We next show how the sparse-position oracle for $H_X$, that is, $O_{H_X}^{\mathrm{pos}}$, can be implemented without querying the Boolean oracle $O_X$.
Again since $H_X$ contains only $\overline A_X$ and $\overline A_X^T$ in its two off-diagonal blocks, it is enough to understand the locations of the nonzero entries of $\overline A_X$.

Consider a row $(a,t')$ of $\overline A_X$.
As shown above, the first term in~\eqref{eq:Abar-entry} contributes the diagonal entry in column $(a,t')$.
The second term can have support only on columns whose clock-register index is $t=t'-1\bmod 3n$.
The corresponding logical-register indices are determined by the support of the matrix $L_t$.

If $t$ lies in the first or final stage of the computation, then $L_t$ is some $K_x$ or $K_x^\dagger$.
By Lemma~\ref{prop:selector}, every row and every column of these matrices contains all $m$ logical-register indices as nonzero entries, independently of the input string $x$.
Thus the locations of these nonzero entries depend only on the row label $(a,t')$ and on the fixed definition of the clock construction, and not on any bit of $X$.

If $t$ lies in the middle stage, then $L_t=I$.
In this case the second term is nonzero only at the logical-register index $b=a$.
Again, this location is completely determined by the matrix indices and is independent of $X$.
If a row contains fewer than $m+1$ nonzero entries, the remaining sparse indices may be assigned to predetermined zero entries, as allowed by the sparse-access model.

Therefore, given a row label and a sparse index, the corresponding column label of a nonzero entry of $\overline A_X$ can be computed without querying $O_X$.
The same is true for $\overline A_X^T$, since its support is obtained by transposing the support of $\overline A_X$.
Since the locations of the two off-diagonal blocks of $H_X$ are also fixed by~\eqref{eq:H-def}, the sparse-position oracle $O_{H_X}^{\mathrm{pos}}$ can be implemented using no queries to $O_X$.

We now consider the sparse-value oracle $O_{H_X}^{\mathrm{val}}$.
Again, it is enough to analyze the entries of $\overline A_X$, because an entry of $\overline A_X^T$ is obtained by exchanging its row and column labels.
Suppose that the requested entry of $\overline A_X$ has row label $(a,t')$ and column label $(b,t)$.
Equation~\eqref{eq:Abar-entry} determines its value completely.
If $t'=t$ and $a=b$, the identity term contributes $1/2$, which is independent of $X$.
If $t'\neq t+1\bmod 3n$, then the second term vanishes.
Hence all entries that arise only from the identity term, as well as all zero entries, can be computed without querying $O_X$.

The only input-dependent case occurs when $t'=t+1\bmod 3n$ and $L_t$ belongs to the first or final stage.
If $0\le t<n$, then $L_t=K_{x^{(t+1)}}$, and~\eqref{eq:K-entry} gives
$\bra{a}L_t\ket{b}=x^{(t+1)}_{a-b-1\bmod m}-2/m$.
Thus the requested matrix entry depends on exactly one Boolean bit, namely $x^{(t+1)}_{a-b-1\bmod m}$.
If $2n\le t<3n$, then $L_t=K_{x^{(3n-t)}}^\dagger$, and the entry formula following Lemma~\ref{prop:selector} gives
$\bra{a}L_t\ket{b}=x^{(3n-t)}_{b-a-1\bmod m}-2/m$.
Thus in this case the requested matrix entry depends on exactly one Boolean bit, namely $x^{(3n-t)}_{b-a-1\bmod m}$.
Finally, if $n\le t<2n$, then $L_t=I$, so $\bra{a}L_t\ket{b}=\delta_{a,b}$ and the entry is again independent of $X$.
Therefore every entry of $\overline A_X$, and hence every entry of $H_X$, is either independent of the Boolean input or is determined by a single bit of one of the strings $x^{(1)},\ldots,x^{(n)}$.

We can therefore simulate one query to $O_{H_X}^{\mathrm{val}}$ as follows.
Given the row and column labels of the requested entry, we first determine whether the entry is independent of $X$.
If it is, its value can be computed directly without querying $O_X$.
Otherwise, the row and column labels determine a unique pair $(j,q)$ such that the entry depends on the Boolean bit $x^{(j)}_q$.
For the first clock interval, $j=t+1$ and
$q=a-b-1\bmod m$.
For the final clock interval, $j=3n-t$ and
$q=b-a-1\bmod m$.

We compute the pair $(j,q)$ into ancillary registers, choosing a fixed valid pair, say $(1,0)$, whenever the requested entry is independent of $X$.
We then apply $O_X$ once to write $x^{(j)}_q$ into an ancilla.
Using the row and column labels, the known parameters $r$ and $m$, and this bit when needed, we XOR the binary representation of the requested matrix entry into the value register.
For entries independent of $X$, this computation ignores the queried bit.
We apply $O_X$ a second time to return the bit ancilla to $\ket{0}$, and then uncompute the address registers without further queries.
This procedure acts coherently on superpositions of row and column labels and uses exactly two queries to $O_X$.

It remains to verify that the entries of $H_X$ admit the exact binary representation required by the sparse-value oracle.
When $m$ and $R$ are powers of two, the quantities
$2/m$, $1/R$, $r=1-1/R$, and $r/2$ are dyadic rationals.
Equation~\eqref{eq:Abar-entry}, together with the formulas for the entries of $K_x$ and $K_x^\dagger$, then shows that every entry of $\overline A_X$ is a dyadic rational.
The same is true for $\overline A_X^T$ and hence for $H_X$.
Therefore every entry of $H_X$ has an exact finite binary representation.

This concludes the proof.
\end{proof}

The logical register has dimension $m$, while the clock register has dimension $3n$.
Hence $A_X$ acts on a space of dimension $3mn$, and the symmetric embedding doubles this dimension:
\begin{equation}\label{eq:H-dimension}
    \dim(H_X)=6mn.
\end{equation}

At this point all properties of the hard QLS instance have been established.
Its sparsity is at most $m+1$, its condition number is $2R-1$, and its exact normalized solution contains the computation-history states from~\eqref{eq:history-state}.
Most importantly,~\eqref{eq:p-lower} shows that the middle computation-history states, which carry $\ket{F(X)_L}$ by~\eqref{eq:middle-beta}, have total probability greater than $6\varepsilon$.
The next section shows that this is enough to recover $F(X)$ even when the QLS solution is prepared only to error $\varepsilon$.

\section{Decoding from an Approximate Solution}\label{sec:decode}

We now show how to recover the Boolean value $F(X)$ from the output of a QLS algorithm that prepares only an $\varepsilon$-approximation to the exact solution state.
Let $\ket{x}$ denote the exact normalized solution state of the QLS instance $(H_X,\ket b)$ constructed in Section~\ref{sec:linear}.
So, we have $\ket{x} = H_X^{-1} \ket{b}/\norm{H_X^{-1}\ket{b}}$.
By~\eqref{eq:H-inverse-b}, observe that the state $\ket{x}$ has support entirely on the lower block of the symmetric embedding, and within that block, it is exactly the normalized version of the state $A_X^{-1}\ket{b_0}$ whose expression is given in~\eqref{eq:history-state}.
Moreover, by~\eqref{eq:middle-beta}, whenever the clock register lies in the middle interval $\{n,\ldots,2n-1\}$, the logical register is exactly in the state $\ket{F(X)_L}$.
We will use these two properties to define a measurement procedure that decodes $F(X)$.

Recall from~\eqref{eq:H-def} that $H_X$ acts on the direct sum of two copies of the logical-clock Hilbert space.
We first perform the two-outcome projective measurement
\begin{equation}\label{eq:projective-meas}
    \Pi_{\mathrm{up}}
    =
    \begin{pmatrix}
        I&0\\
        0&0
    \end{pmatrix},
    \qquad
    \Pi_{\mathrm{low}}
    =
    \begin{pmatrix}
        0&0\\
        0&I
    \end{pmatrix}.
\end{equation}
If the outcome is $\Pi_{\mathrm{up}}$, we output an independent uniformly random bit.
If the outcome is $\Pi_{\mathrm{low}}$, we measure the clock register in the basis $\{\ket{t}_c:0\le t<3n\}$ and denote the outcome by $t$.

Suppose first that $n\le t<2n$.
For the exact solution,~\eqref{eq:middle-beta} shows that the logical register is then exactly in the state $\ket{F(X)_L}$.
We therefore measure the logical register in an orthonormal basis containing the two states
$\ket{0_L}$ and $\ket{1_L}$.
If the outcome is $\ket{z_L}$ for some $z\in\{0,1\}$, we output the bit $z$.
If the logical-register measurement gives an outcome orthogonal to $\operatorname{span}\{\ket{0_L},\ket{1_L}\}$, we instead output an independent uniformly random bit.
For the exact solution, this latter case has probability zero whenever $n\le t<2n$.

Now, if $t\notin\{n,\ldots,2n-1\}$, we do not measure the logical register and instead output an independent uniformly random bit.
This completes the implementation of the decoder.
The entire decoding procedure consists only of measurements and classical postprocessing, and therefore requires no additional queries to the sparse-matrix oracles.

We first evaluate this decoder on the exact normalized solution $\ket{x}$.
By~\eqref{eq:H-inverse-b}, the state $\ket{x}$ lies entirely in the lower block of the symmetric embedding.
Thus the first measurement, that is, the two-outcome projective measurement defined in~\eqref{eq:projective-meas}, produces the outcome $\Pi_{\mathrm{low}}$ with probability one.
Conditioned on this outcome, the state of the logical and clock registers is the normalized computation-history state obtained from~\eqref{eq:history-state}.
By~\eqref{eq:p-formula}, measuring the clock register produces an outcome $t\in\{n,\ldots,2n-1\}$ with probability $p$.
For every such value of $t$,~\eqref{eq:middle-beta} shows that the state of the logical register is exactly $\ket{F(X)_L}$.
Hence, conditioned on the clock lying in the middle interval, the logical-register measurement returns the bit $F(X)$ with probability one.

With the remaining probability $1-p$, measuring the clock register gives outcome $t \notin \{n,\ldots,2n-1\}$.
In this case, by definition of the decoder, we output an independent uniformly random bit and are therefore correct with probability $1/2$.
The total success probability of the decoder on the exact solution is consequently
\begin{equation}
    P_{\mathrm{exact}} = p\cdot 1 + (1-p)\cdot\frac12 = \frac12+\frac p2.
\end{equation}
Using the lower bound $p>6\varepsilon$ from~\eqref{eq:p-lower}, we obtain
\begin{equation}\label{eq:exact-decoder-success}
P_{\mathrm{exact}} > \frac12+3\varepsilon.
\end{equation}

We next replace the exact solution $\ket{x}$ by the state produced by an $\varepsilon$-accurate QLS algorithm.
Suppose that the algorithm prepares a normalized state $\ket{\widetilde{x}}$ satisfying $\norm{\ket{\widetilde{x}}-\ket{x}}\le\varepsilon$.
We need to determine how much the success probability of the decoder can change when its input is changed from $\ket{x}$ to $\ket{\widetilde{x}}$.

Let $c=\braket{x}{\widetilde{x}}$.
The trace distance between the corresponding pure quantum states is 
\begin{equation}
\frac{1}{2}\norm{\ket{\widetilde{x}}\!\bra{\widetilde{x}}-\ket{x}\!\bra{x}}_1=\sqrt{1-|c|^2}.
\end{equation}
On the other hand, $\norm{\ket{\widetilde{x}}-\ket{x}}^2=2-2\operatorname{Re}(c)$.
Since $1-|c|^2=(1-|c|)(1+|c|)\le2(1-|c|)\le2(1-\operatorname{Re}(c))$, we obtain
\begin{equation}\label{eq:trace-euclidean}
    \frac12
    \norm{
        \ket{\widetilde{x}}\!\bra{\widetilde{x}}
        -
        \ket{x}\!\bra{x}
    }_1
    \le
    \norm{\ket{\widetilde{x}}-\ket{x}}
    \le
    \varepsilon.
\end{equation}

We know that the probability of any fixed measurement event can change by at most the trace distance between the two states.
In particular, the complete decoding procedure above can be regarded as a fixed measurement whose successful outcome is the event that the output equals $F(X)$.
It follows from~\eqref{eq:trace-euclidean} that the success probability of the decoder on $\ket{\widetilde{x}}$ is at least $P_{\mathrm{exact}}-\varepsilon$.
Combining this with~\eqref{eq:exact-decoder-success} gives 
\begin{equation}
    P_{\mathrm{approx}}>1/2+2\varepsilon.
\end{equation}

We have therefore proved the following statement.

\begin{lemma}[QLS decoder]\label{lem:decoder}
An $\varepsilon$-accurate QLS algorithm for the constructed instance $(H_X, \ket b)$ can be followed by a query-free decoding procedure that computes $F(X) = \bigoplus_{t=1}^{n}g(x^{(t)})$ with success probability strictly larger than $1/2+2\varepsilon$.
\end{lemma}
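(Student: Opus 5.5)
The decoder lemma is essentially a packaging of the computations carried out in Section~\ref{sec:decode}, so the proof will assemble those steps in order. We fix an arbitrary $\varepsilon$-accurate QLS algorithm for $(H_X,\ket b)$ and let $\ket{\widetilde{x}}$ be its output. We then regard the whole decoding procedure as a single two-outcome POVM $\{E_{\mathrm{succ}},I-E_{\mathrm{succ}}\}$, where $E_{\mathrm{succ}}$ corresponds to the event that the output bit equals $F(X)$. The decoder has three parts: the block measurement $\{\Pi_{\mathrm{up}},\Pi_{\mathrm{low}}\}$, the clock measurement, and the logical measurement (in a basis containing $\ket{0_L},\ket{1_L}$), with random bits output in the fallback cases. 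All of these are fixed measurements plus classical randomness that depend only on $n$ and $m$ and not on $X$ beyond the target value. The procedure therefore makes no oracle calls, and it is a legitimate measurement with $0\le E_{\mathrm{succ}}\le I$.

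The first step evaluates $\bra{x}E_{\mathrm{succ}}\ket{x}$ on the exact solution. By~\eqref{eq:H-inverse-b}, $\ket x$ lies entirely in the lower block and equals the normalized history state~\eqref{eq:history-state}, so the block measurement yields $\Pi_{\mathrm{low}}$ with certainty. The clock then lands in the middle interval with probability $p$. By~\eqref{eq:middle-beta}, the logical register is then exactly $\ket{F(X)_L}$, so the decoder is correct with certainty. In every other case it outputs a uniform bit and is correct with probability $1/2$. This gives $P_{\mathrm{exact}}=1/2+p/2$, and~\eqref{eq:p-lower} yields $P_{\mathrm{exact}}>1/2+3\varepsilon$.

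The second step transfers this bound to $\ket{\widetilde{x}}$. For any operator $0\le E\le I$ we have $|\operatorname{tr}(E(\rho-\sigma))|\le\frac12\norm{\rho-\sigma}_1$. We apply this with $\rho,\sigma$ the two pure states, using the bound~\eqref{eq:trace-euclidean}, which follows from $1-|c|^2\le 2(1-\operatorname{Re}c)$. The trace distance is thus at most $\varepsilon$, and $P_{\mathrm{approx}}\ge P_{\mathrm{exact}}-\varepsilon>1/2+2\varepsilon$.

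The main point needing care is that the success event is a fixed measurement independent of which approximate state is supplied. In particular, the fallback branches must be specified for arbitrary inputs, including upper-block weight and logical outcomes outside the logical span. Outputting independent random bits there makes $E_{\mathrm{succ}}$ a well-defined effect, so the trace-distance argument applies. No other obstacle arises.
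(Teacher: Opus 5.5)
Your proposal is correct and follows the paper's proof essentially step for step. You evaluate the decoder on the exact history state to get $P_{\mathrm{exact}}=\tfrac12+\tfrac p2>\tfrac12+3\varepsilon$, then lose at most the trace distance, which is at most $\varepsilon$ via $1-|c|^2\le 2(1-\operatorname{Re}c)$; phrasing the decoder as a single effect $E_{\mathrm{succ}}$ is simply a more explicit form of the paper's ``fixed measurement event'' step.
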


The important point is that the decoder does not recover $F(X)$ with constant advantage over random guessing.
Its advantage is only of order $\varepsilon$.
This is exactly the regime for which the XOR lemma in Lemma~\ref{lem:xor} is needed.
We now use that lemma to convert the decoding procedure into the final Boolean query lower bound.

\section{Proof of the Query Lower Bound}\label{sec:lower}

We now complete the proof of Theorem~\ref{thm:main}.
Lemma~\ref{lem:decoder} shows that an $\varepsilon$-accurate QLS solver for the instance $(H_X, \ket b)$ gives an algorithm for computing $F(X)$ with success probability strictly greater than $1/2+2\varepsilon$, or equivalently with error probability strictly smaller than $1/2-2\varepsilon$.

To apply the XOR lemma (Lemma~\ref{lem:xor}), we choose its parameter $\delta$ so that the error probability appearing in~\eqref{eq:xor} is exactly $1/2-2\varepsilon$.
Set
\begin{equation}\label{eq:delta}
    \delta
    \coloneqq
    (4\varepsilon)^{2/n}.
\end{equation}
Since $0<\varepsilon\le1/64$, we have $0<4\varepsilon<1$, and therefore $0<\delta<1$, as required in Lemma~\ref{lem:xor}.
Moreover, $\delta^{n/2}=4\varepsilon$, and hence
\begin{equation}
    \frac{1-\delta^{n/2}}{2}
    =
    \frac12-2\varepsilon.
\end{equation}
Thus the error probability achieved by the decoder is strictly smaller than the error probability threshold required to apply Lemma~\ref{lem:xor}.

The lower bound in Lemma~\ref{lem:xor} (i.e., right-hand side of inequality in~\eqref{eq:xor}) contains an additional multiplicative factor $\delta$.
We therefore need to show that our choice of $n$ prevents $\delta$ from becoming arbitrarily small.
From~\eqref{eq:delta}, we have $\log\delta=-2(\log(1/\varepsilon)-\log4)/n>-2\log(1/\varepsilon)/n$.
The lower bound on $n$ in~\eqref{eq:n-main-bounds}, together with the fact that $R\ge16$, gives
$n>(R/8)\log(1/\varepsilon)\ge2\log(1/\varepsilon)$.
Therefore $\log\delta>-1$, and hence
\begin{equation}\label{eq:delta-constant}
    \delta>e^{-1}.
\end{equation}
Thus the factor $\delta$ in the XOR lemma remains bounded below by an absolute positive constant, independently of $n$, $m$, $R$, and $\varepsilon$.

We can now apply the XOR lemma to the function $g$ with $k=n$.
Using the adversary lower bound for a single \textsc{Position-Parity} instance from Lemma~\ref{lem:colored}, we obtain
\begin{align}
    Q_{\frac12-2\varepsilon}(F)
    &\ge
    \frac{n\delta}{8}\Adv^\pm(g)
    \nonumber\\
    &\ge
    \frac{n\delta}{8}\sqrt{\frac{m}{2}}
    \nonumber\\
    &>
    \frac{n}{8e}\sqrt{\frac{m}{2}},
    \label{eq:boolean-lb}
\end{align}
where the last inequality follows from~\eqref{eq:delta-constant}.
In particular, we have
\begin{equation}\label{eq:boolean-lb-asymptotic}
    Q_{\frac12-2\varepsilon}(F)
    =
    \Omega(n\sqrt{m}).
\end{equation}

We next transfer this Boolean query lower bound to sparse-matrix queries to the QLS instance $(H_X, \ket b)$.
Suppose that a QLS algorithm prepares an $\varepsilon$-accurate solution of the instance $(H_X,\ket b)$ using $T$ total queries to $O_{H_X}^{\mathrm{pos}}$, $O_{H_X}^{\mathrm{val}}$, and their inverses.
Lemma~\ref{lem:sparse} shows that each query to $O_{H_X}^{\mathrm{pos}}$ or its inverse can be implemented without making any query to the oracle $O_X$.
The same lemma shows that each query to $O_{H_X}^{\mathrm{val}}$ or its inverse can be implemented using at most two queries to $O_X$.
Therefore the entire QLS algorithm can be simulated using at most $2T$ queries to $O_X$.

After simulating the QLS algorithm, we apply the decoder from Lemma~\ref{lem:decoder}.
The decoder uses no additional oracle queries and computes $F(X)$ with error probability strictly smaller than $1/2-2\varepsilon$.
Thus the simulation gives a Boolean query algorithm for $F$ using at most $2T$ queries.
Combining this fact with~\eqref{eq:boolean-lb-asymptotic} gives
\begin{equation}\label{eq:T-nm}
T = \Omega(n\sqrt{m}).
\end{equation}

It remains only to relate the auxiliary parameters $R$, $m$, and $n$ to the QLS parameters $\kappa$, $s$, and~$\varepsilon$.

We first choose $R$.
For $\kappa\ge31$, let $R$ be the largest power of two satisfying
\begin{equation}\label{eq:R-choice}
    R \le \frac{\kappa+1}{2}.
\end{equation}
Because $(\kappa+1)/2\ge16$, this choice guarantees $R\ge16$, so all estimates established in Section~\ref{sec:linear} apply.
Moreover,~\eqref{eq:H-kappa} gives $\kappa(H_X)=2R-1\le\kappa$, so the constructed matrix satisfies the required condition-number promise.

We also need a lower bound on $R$ in terms of $\kappa$.
Since $R$ is the largest power of two satisfying~\eqref{eq:R-choice}, the next power of two does not satisfy it.
Thus $2R>(\kappa+1)/2$, which implies
\begin{equation}\label{eq:R-kappa-lower}
    R>\frac{\kappa+1}{4}>\frac{\kappa}{4}.
\end{equation}
Therefore $R=\Theta(\kappa)$ within the constructed family.

We choose $m$ similarly in terms of the required sparsity parameter $s$.
For $s\ge5$, let $m$ be the largest power of two satisfying
\begin{equation}\label{eq:m-choice}
    m\le s-1.
\end{equation}
Because $s\ge5$, this gives $m\ge4$, as required in the construction of the \textsc{Position-Parity} problem.
By Lemma~\ref{lem:sparse}, we have
$\operatorname{sparsity}(H_X)\le m+1\le s$, so the matrix $H_X$ obeys the desired sparsity promise.
Furthermore, since the next power of two is larger than $s-1$, we have
$2m>s-1$, and therefore
$m>(s-1)/2$.
For $s\ge5$, this implies $m>s/4$, so
\begin{equation}\label{eq:sqrt-m-s}
    \sqrt{m}
    >
    \frac{\sqrt{s}}{2}.
\end{equation}
Thus $m=\Theta(s)$ and the factor $\sqrt{m}$ in~\eqref{eq:T-nm} becomes a factor $\Theta(\sqrt{s})$.

Finally, we relate $n$ to $\kappa$ and $\varepsilon$.
The lower bound on $n$ in~\eqref{eq:n-main-bounds} and the relation $R>\kappa/4$ from~\eqref{eq:R-kappa-lower} give
\begin{equation}\label{eq:n-kappa-epsilon}
    n
    >
    \frac{\kappa}{32}
    \log\frac1\varepsilon.
\end{equation}
Thus the number of composed \textsc{Position-Parity} instances satisfies
$n=\Omega(\kappa\log(1/\varepsilon))$.

Substituting~\eqref{eq:sqrt-m-s} and~\eqref{eq:n-kappa-epsilon} into the query lower bound~\eqref{eq:T-nm} gives
\begin{equation}
    T = \Omega(n\sqrt{m}) = \Omega\!\left( \kappa\sqrt{s}\log\frac1\varepsilon \right).
\end{equation}
This proves Theorem~\ref{thm:main}.

We conclude by evaluating the dimension of the constructed QLS instance.
By~\eqref{eq:H-dimension}, the matrix $H_X$ has dimension $N=6mn$.
Using $m\le s$, the upper bound
$n<(R/2)\log(1/\varepsilon)$ from~\eqref{eq:n-main-bounds}, and $R\le(\kappa+1)/2$ from~\eqref{eq:R-choice}, we obtain
\begin{equation}
    N < \frac{3}{2} s(\kappa+1) \log\frac1\varepsilon.
\end{equation}
Hence the dimension of the hard instances grows only polynomially with the parameters appearing in the lower bound.

\section{Discussion}\label{sec:discussion}

In this paper, we proved the near-optimal joint sparse-access lower bound $\Omega(\kappa\sqrt{s}\log(1/\varepsilon))$ for the QLS problem, thereby settling the conjectured dependence on the condition number $\kappa$, sparsity $s$, and target error $\varepsilon$. 
The three factors arise from two parts of the construction: a single \textsc{Position-Parity} instance has query complexity $\Omega(\sqrt{m})$, with $m=\Theta(s)$, while the linear-system construction combines $n=\Theta(\kappa\log(1/\varepsilon))$ independent such instances and encodes their XOR into the QLS solution state. 
By Lemma~\ref{lem:decoder}, an $\varepsilon$-accurate QLS solution allows this XOR to be computed with error probability smaller than $1/2-2\varepsilon$, and the XOR lemma then implies that doing so requires $\Omega(n\sqrt{m})$ Boolean queries. 
This realizes the direction of composing Boolean functions identified by the authors of Ref.~\cite{Mori2026} in their discussion as a possible route to a joint lower bound in $\kappa$, $s$, and $\varepsilon$.
As an open question, the hard matrix $H_X$ constructed here is real and symmetric but indefinite, and it would be interesting to determine whether the same joint lower bound continues to hold under additional structural assumptions, such as positive definiteness.

\section{Acknowledgments}
The author is grateful to Alexander M. Dalzell for helpful discussions.
The author acknowledges the use of ChatGPT 5.6 for help with writing and reviewing the manuscript, including identifying a few technical errors and inconsistencies in earlier versions.
This work was supported by the U.S. Department of Energy, Office of Science, National Quantum Information Science Research Centers, Co-design Center for Quantum Advantage (C2QA) under contract number DE-SC0012704, and by the U.S. National Science Foundation National Quantum Virtual Laboratory (NQVL), Erasure Qubits and Dynamic Circuits for Quantum Advantage (ERASE), under grant number OSI-2435244.

\bibliographystyle{unsrt}
\bibliography{ref}

\end{document}